\def\llncs{0}
\documentclass{article}
\usepackage{graphicx} 
\usepackage{fullpage}
\usepackage{amsmath}

\usepackage{amsthm}
\usepackage{amssymb}
\usepackage{physics}
\usepackage{hyperref}
\usepackage{cleveref}
\usepackage{dsfont}
\usepackage{xcolor}
\usepackage{mathtools}
\usepackage{breakcites}
\usepackage{subcaption}
\usepackage[shortlabels]{enumitem}

\ifnum\llncs=0
\newtheorem{theorem}{Theorem}[section]

\newtheorem{corollary}[theorem]{Corollary}
\newtheorem{definition}[theorem]{Definition}
\newtheorem{lemma}[theorem]{Lemma} 

\newtheorem{remark}[theorem]{Remark}

\theoremstyle{definition}

\newtheorem{construction}{Construction}
\fi

\newcommand{\A}{\mathcal{A}}

\newcommand{\N}{\mathbb{N}}

\newcommand{\E}{\mathop{\mathbb{E}}}

\newcommand{\poly}{\mathsf{poly}}
\newcommand{\negl}{\mathsf{negl}}

\DeclareMathOperator{\Sim}{\mathsf{Sim}}

\DeclareMathOperator{\Gen}{\mathsf{Gen}}
\DeclareMathOperator{\Enc}{\mathsf{Enc}}
\DeclareMathOperator{\Lab}{\mathsf{Lab}}

\DeclareMathOperator{\Dec}{\mathsf{Dec}}

\DeclareMathOperator{\IND}{\mathsf{IND}}
\DeclareMathOperator{\PR}{\mathsf{PR}}
\DeclareMathOperator{\CLONE}{\mathsf{CLONE}}
\DeclareMathOperator{\IDCLONE}{\mathsf{IDCLONE}}

\newcommand{\BigPr}[2]{
\Pr\left[
\begin{array}{c}
#1
\end{array}
:
\begin{array}{c}
#2
\end{array}
\right]
}

\newcommand{\xor}{\oplus}

\newcommand{\Haar}{\mathsf{Haar}}

\newcommand{\Reg}[1]{\mathbf{#1}}

\newcommand{\secpa}{\lambda}
\newcommand{\sk}{\mathsf{sk}}
\newcommand{\ek}{\mathsf{ek}}
\newcommand{\dk}{\mathsf{dk}}
\newcommand{\ct}{\mathsf{ct}}
\newcommand{\dqre}{\mathsf{dqre}}
\newcommand{\re}{\mathsf{re}}
\newcommand{\lab}{\mathsf{lab}}
\newcommand{\nf}{\mathsf{nf}}
\newcommand{\id}{\mathsf{id}}
\newcommand{\udk}{\mathsf{udk}}
\newcommand{\uni}{\mathsf{uni}}
\newcommand{\Hyb}{\mathsf{Hyb}}

\title{A Note on Boosting Uncloneable Encryption in Microcrypt}
\author{James Bartusek\thanks{Columbia University (bartusek.james@gmail.com).} \and Eli Goldin\thanks{New York University (eli.goldin@nyu.edu). Supported by an NSF graduate research fellowship.}}

\begin{document}

\maketitle

\begin{abstract}
    In this note, we consider the setting of uncloneable encryption satisfying uncloneable indistinguishability, a form of symmetric key encryption that prevents the cloning of ciphertexts in a very strong sense. Our goal is to minimize the assumptions under which (many-time secure) uncloneable encryption is known to exist, \emph{assuming} the existence of an information-theoretic ``uncloneable bit", i.e. a one-time secure uncloneable encryption scheme for one-bit messages. We observe that if a $t\to t'$ uncloneable bit exists, then the following implications hold.
    \begin{enumerate}
        \item If many-time secure symmetric key encryption exists, then many-time secure $t\to t'$ uncloneable encryption for arbitrary-length messages exists. Since many-time secure uncloneable encryption implies many-time secure symmetric key encryption, this result is tight.
        \item If pseudorandom unitaries exist, then many-time secure $t\to t'$ uncloneable encryption for arbitrary-length messages with \emph{identical copy} security exists.
    \end{enumerate}
    These results together show that many-time secure uncloneable encryption may follow from concrete assumptions in ``microcrypt", the world of unstructured quantum cryptography that plausibly exists even if $P=NP$.
\end{abstract}
\section{Introduction}

Uncloneable encryption (UE) is a fundamental primitive in quantum cryptography that leverages both the uncloneability of quantum information and the privacy of semantically-secure encryption to yield exciting new applications. Originally introduced by \cite{BL20}, UE has since been studied in  several works (e.g. \cite{10.1007/978-3-030-90459-3_11,AKLLZ22,AKL23,Botteron:2024orj,bhattacharyya2026uncloneable,poremba_et_al:LIPIcs.ITCS.2026.109,EPRINT:CGKNY25,EPRINT:AnaGol25,bhattacharyya2026uncloneablebitexists,bartusek2026unclonableencryptionhaarrandom}) proposing constructions that satisfy different security notions under various assumptions and/or idealized models.

In this note, we focus our attention on secret-key UE with standard semantic security, often referred to as \emph{uncloneable indistinguishability}. Analogously to plain secret-key encryption (SKE), UE with \emph{one-time security} plausibly exists information-theoretically. This question has a long history of study, and has come to be known as the question of whether an ``uncloneable bit'' exists. In fact, a recent preprint \cite{bhattacharyya2026uncloneablebitexists} has claimed to solve the problem with an approach based on a random 2-design.


On the other hand, if we require \emph{many-time} (reusable) security, where a single secret key can be used to encrypt several messages,\footnote{In particular, more bits than the size of the secret key.} then UE provably requires cryptographic assumptions. A recent work \cite{bartusek2026unclonableencryptionhaarrandom} has established that reusable UE lives in ``microcrypt'', a world of very unstructured cryptography that plausibly remains secure even in the event that P = NP. Importantly, this result does not rely on the existence of an information-theoretic uncloneable bit. However, security is argued in an idealized model, namely the Haar random oracle model, and does not say anything about the existence of (reusable) UE from \emph{concrete} microcrypt assumptions.

In this note, we seek to minimize the concrete assumptions under which reusable UE exists, \emph{assuming} the existence of an information-theoretic uncloneable bit. While some previous works have studied similar questions, e.g. \cite{10.1007/978-3-030-90459-3_11,TCC:HKNY24}, we systematize and generalize existing results, establishing a clearer picture of reusable UE and the assumptions on which it relies.

\subsection{Results}

First, we show that if an uncloneable bit exists, then reusable UE is \emph{equivalent} to reusable (plain) SKE. In fact, we consider a generalization of the typical 1-to-2 UE security game to any $t < t'$, and show the equivalence between reusable $t \to t'$ UE and reusable SKE, assuming the existence of a $t \to t'$ uncloneable bit. 

\begin{theorem}[Informal]\label{thm:from-reusable-SKE}
    For any $t' > t$, assuming the existence of a $t \to t'$ uncloneable bit and reusable SKE, there exists reusable $t \to t'$ UE. 
\end{theorem}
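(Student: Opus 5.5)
The plan is a two-step boosting argument. First, amplify the one-time uncloneable bit to a one-time $t\to t'$ UE scheme for arbitrary-length messages. Second, make it reusable by hybrid encryption, using the reusable SKE to hide a fresh one-time key.

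\textbf{Step 1 (one-time, long messages).} From a $t\to t'$ uncloneable bit I would first obtain a one-time UE scheme for messages of any length $n$. Concatenating $n$ independent uncloneable bits does not obviously work, because cloning adversaries may correlate across copies. I would instead use the standard trick of encrypting a single bit under the uncloneable bit scheme and deriving the message from it via a leakage-resilient/erasure-style encoding. Concretely, I would use an uncloneable bit with a multi-bit one-time key $k$ and output $(\ct_{\uni}(b), \text{OTP})$. A cleaner route is to invoke a known result that a one-time uncloneable bit yields a one-time uncloneable scheme for long messages via a one-time pad. Take a fresh OTP key $r$ and a bit $b$; send $m\oplus r$ in the clear, together with uncloneable encryptions of $r$ split via secret sharing across bits. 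Indistinguishability then reduces to the hardness of recovering $r$ in a distinguishing sense. If this reduction is delicate, I would fall back on the observation (used e.g.\ in \cite{10.1007/978-3-030-90459-3_11}) that one-time UE for $n$-bit messages follows from the uncloneable bit by a hybrid over message bits. In that hybrid, the $t'$ adversaries jointly simulate the remaining uncloneable bit encryptions themselves, since the keys of the other positions are independent and can be sampled by the reduction and handed to all $t'$ parties at the key-reveal stage.

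\textbf{Step 2 (reusability).} For the reusable scheme, the secret key is an SKE key $\sk$. To encrypt $m$, sample a fresh one-time UE key $k$ and output $(\ct_1, \Enc_{\sk}(k))$, where $\ct_1$ is the one-time UE encryption of $m$ under $k$ from Step 1. Decryption recovers $k$ and then $m$. Correctness is immediate.

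\textbf{Security.} Security proceeds by hybrids. In the reusable $t\to t'$ game, the adversary gets an encryption oracle (polynomially many queries), $t$ copies of the challenge ciphertext, splits into $t'$ parties, and each party then receives $\sk$. The obstacle is that the final key reveal gives out $\sk$, so SKE security cannot be applied directly once $\sk$ is revealed. I would handle this by having the challenge ciphertext not depend on $\sk$ beyond an SKE encryption of $k^*$, and by switching $\Enc_{\sk}(k^*)$ to $\Enc_{\sk}(0)$. That step is not justified, because the adversary later learns $\sk$ and could decrypt. So I would instead reduce directly to one-time UE. The reduction samples $\sk$ itself, answers all encryption queries honestly, and obtains the challenge $\ct_1^*$ from the one-time UE challenger. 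It then needs to produce $\Enc_{\sk}(k^*)$ without knowing $k^*$. It does this by placing $\Enc_{\sk}(0)$ in the challenge, and later, when the one-time challenger reveals $k^*$ to each of the $t'$ parties, it hands out $\sk$ together with $k^*$ directly. The decryption key given to each party is then the pair $(\sk, k^*)$ rather than $\sk$ alone.

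To make this a faithful simulation, I would define the scheme's revealed key per ciphertext to include $k$. Equivalently, I would structure the game so that, in the view, the SKE ciphertext is used only for correctness. The cleanest fix is to replace SKE with a PRF-style derivation: $k=F_{\sk}(r)$ for fresh public $r$. The difficult step is the hybrid replacing $F_{\sk}(r^*)$ by truly random $k^*$ while $\sk$ is eventually revealed. I would first try the approach from the paper's setting: the revealed information in the key phase is the per-ciphertext key, and the pseudorandomness switch happens before the reveal.
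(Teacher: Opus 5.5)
Each of your two steps has a genuine gap, and the first is the core of the theorem. The fallback you settle on for Step 1, a hybrid over message bits, amounts to encrypting $m$ bitwise under independent uncloneable-bit keys, and that scheme is insecure. Take $m_0=0^n$, $m_1=1^n$ with $tn\ge t'$. The splitter hands a different bit ciphertext to each of the $t'$ parties. Once the key is revealed, each party decrypts its own bit and outputs $b$ with probability $1$. The hybrid reasoning breaks because the cloning game is not a distinguishing game: success is the event that \emph{all} $t'$ parties guess $b$. Adjacent-hybrid reductions therefore only give bounds of the form $\Pr_{H_j}[\text{all output }0]+\Pr_{H_{j+1}}[\text{all output }1]\le 1+\negl(\secpa)$. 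These bounds do not chain, because $\Pr_{H}[\text{all output }0]+\Pr_{H}[\text{all output }1]$ can be far below $1$. The one-time-pad/secret-sharing variant fails the same way: $m\xor r$ is classical and copyable, and each bit of $r$ is recoverable from a disjoint set of bit ciphertexts. Your first instinct, deriving the whole message from a \emph{single} uncloneable bit, is the right one. It cannot be done at encryption time, though, since $m$ is arbitrary and $(m_0,m_1)$ is only chosen later. The paper bridges this with a DQRE, obtained from reusable SKE via \cite{C:AnaQiaYue22,STOC:BraYue22}. An honest ciphertext garbles the constant circuit $C[m]$ on quantum input $0$ and classical input $\dk^1$. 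In the reduction, DQRE security (with the other copies as quantum side information) lets this be swapped for a garbling of $D[m_0,m_1]$, which decrypts the challenge bit ciphertext $\Enc^1(\ek^1,b)$ with $\dk^1$ and outputs $m_b$.

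Step 2 fails for the reason you yourself identify, and none of your patches fixes it. With a fresh one-time key $k$ per ciphertext and $\sk$ revealed at the end, the reduction cannot produce $\Enc_{\sk}(k^*)$ for the challenger's unknown $k^*$. Planting $\Enc_{\sk}(0)$ is detectable by parties holding $\sk$. Declaring $k$ part of the revealed key changes the security notion, since the game reveals one fixed $\dk$. With $k=F_{\sk}(r)$, the value $F_{\sk}(r^*)$ is determined by the revealed $\sk$ and cannot be made to equal an independent $k^*$. Even setting this aside, the $t$ challenge copies in $\rho_b^{\otimes t}$ are independent encryptions and carry independent keys $k^*_1,\dots,k^*_t$. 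The one-time $t\to t'$ game instead supplies $t$ copies under a single key, so the reduction does not match for $t>1$.

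The paper avoids fresh keys altogether. The uncloneable-bit key $\dk^1$ is a long-term part of the key, and honest encryption never runs $\Enc^1$. The DQRE label for $\dk^1_i$ is encrypted under $\ek^{\re}_{i,\dk^1_i}$ and a dummy under $\ek^{\re}_{i,1-\dk^1_i}$. The decryption key is $\dk^1$ together with only the keys $\dk^{\re}_{i,\dk^1_i}$. The reduction can then answer encryption queries without knowing $\dk^1$ by encrypting both labels. All $t$ challenge copies share the single key $\dk^1$ and line up with the $t$ copies in the bit game. IND-CPA is invoked only for the keys $\dk^{\re}_{i,1-\dk^1_i}$, which are never revealed.
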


Next, we consider variants of the definition of uncloneable encryption. In particular, the flavor of UE that results from the above theorem statement comes with the following caveats. 

\begin{itemize}
    \item The encryption and decryption key are different. That is, one needs ``more'' information to encrypt than decrypt, and the cloning game adversaries are only given what is necessary to decrypt. A stronger notion of UE is one in which the encryption and decryption key are identical, which we refer to as ``normal form'' UE.
    \item The $t$ challenge ciphertexts given to the cloning adversary are \emph{mixed states}, sampled independently as encryptions of the same message $m$. One could alternatively consider a notion where a \emph{single} encryption of $m$ is sampled and the adversary is given $t$ identical copies of this state, which we refer to as UE with ``identical copy'' security.
\end{itemize}

We next show how to obtain normal form UE as well as UE with identical copy security, though we do not know how to do so from the generic assumption of reusable SKE. Instead, we assume the existence of \emph{pseudorandom unitaries}, which is nevertheless a standard microcrypt assumption.

\begin{theorem}[Informal]\label{thm:from-PRU}
    For any $t' > t$, assuming the existence of a $t \to t'$ uncloneable bit and pseudorandom unitaries, there exists reusable \emph{normal form} $t \to t'$ UE with \emph{identical copy security}.
\end{theorem}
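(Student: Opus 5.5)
Our plan is to build the scheme by hybrid encryption: the uncloneable bit protects a fresh one-time key, and the PRU supplies reusability. First we boost the $t\to t'$ uncloneable bit to a one-time $t\to t'$ UE scheme for arbitrary-length messages with identical-copy security. One option is bitwise encryption under independent keys, argued by a hybrid over positions. A cleaner option is to use the uncloneable bit to encrypt a one-time-pad key $k$ and then output $m \xor k$ in the clear. Since the cloning adversaries receive the key after splitting, the argument must reduce to the uncloneable bit via a hybrid over the bits of $k$; the standard search-to-decision and product arguments apply because each bit is encrypted under an independent uncloneable-bit key.

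Next we make the scheme reusable with a single key $\sk$, which will be the PRU key. To encrypt $m$, we sample a fresh one-time key $\sk_{\mathrm{ot}}$ and form the one-time uncloneable ciphertext $\rho_{\mathrm{ot}}$ of $m$. We then output $U_{\sk}\bigl(\ket{0^n}\!\bra{0^n}\otimes\ket{\sk_{\mathrm{ot}}}\!\bra{\sk_{\mathrm{ot}}}\otimes\ket{r}\!\bra{r}\bigr)U_{\sk}^\dagger \otimes \rho_{\mathrm{ot}}$, with fresh randomness $r$. Alternatively, we can make the whole ciphertext a single pure state by purifying: apply $U_{\sk}$ to $\ket{0}\ket{\sk_{\mathrm{ot}}}\ket{r}\otimes\ket{\psi_{\mathrm{ot}}}$, where the one-time uncloneable ciphertext is purified together with its key. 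Decryption applies $U_{\sk}^\dagger$, reads off $\sk_{\mathrm{ot}}$, checks the $\ket{0}$ flag, and decrypts. The encryption and decryption key are both $\sk$, so the scheme is in normal form.

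Identical-copy security will follow because the cloning game gives the adversary $t$ copies of one ciphertext state. Replacing $U_{\sk}$ by a Haar-random $U$ costs only negligible loss by PRU security, even against adversaries that later receive $\sk$. This is the delicate point, since the key is revealed after splitting. We handle it by noting that the reduction needs only the pre-split state: the post-split key is $\sk$ itself. Since that is problematic, we switch to a PRU-based construction of SKE via pseudorandom states in the PRU-keyed wrapper, and argue via the structure used for Theorem~\ref{thm:from-reusable-SKE}. Concretely, we give out a decryption key $\dk=\sk$ only in the final phase. We use that the one-time ciphertext, rather than the PRU layer, carries uncloneability. The hybrid therefore replaces the many-time queries and challenge wrapper by Haar-random states with the one-time key embedded, and invokes the one-time identical-copy security with the PRU key simulated using the uncloneable bit's key in the final phase.

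The main obstacle is exactly this key-revelation issue: we must show that revealing the PRU key after splitting does not help. We would first try to make the PRU layer carry only computational privacy for the encryption oracle queries. Then, in the challenge ciphertext, $\sk_{\mathrm{ot}}$ would be given in a form that, once the global key is revealed, still reduces to the one-time game where that key is revealed anyway. Encryption-oracle queries under a Haar unitary can be simulated by the path-recording technique without knowing $\sk$, before switching back at the end.
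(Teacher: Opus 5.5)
Your proposal has a genuine gap, and it starts with the first step. The $t\to t'$ cloning game is won on the simultaneous event $b=b_1=\dots=b_{t'}$. That is not a distinguishing advantage, so a hybrid over the bits of $m$ (or of a one-time pad $k$) does not compose. Set $\mathrm{adv}(x,y)=\Pr[\text{all } A_i \text{ output } 0\mid x]+\Pr[\text{all } A_i \text{ output } 1\mid y]-1$. Then $\mathrm{adv}(x,z)=\mathrm{adv}(x,y)+\mathrm{adv}(y,z)+\bigl(1-\Pr[\text{all output }0\mid y]-\Pr[\text{all output }1\mid y]\bigr)$, and the last term can be close to $1$ when the parties disagree on an intermediate hybrid. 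The ``search-to-decision and product arguments'' you invoke would need a parallel-repetition theorem for cloning games and a simultaneous Goldreich--Levin lemma for $t\to t'$ adversaries. Neither is available in this generality, which is why plaintext expansion for uncloneable indistinguishability is nontrivial. You also never explain how identical-copy security arises from a bit that only has i.i.d.-copy security.

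The paper avoids any hybrid over message bits by using the DQRE compiler. The long-term key contains the uncloneable-bit key $\dk^1$. An honest ciphertext garbles the constant circuit $C[m]$ on quantum input $0$ and classical input $\dk^1$, so no uncloneable-bit ciphertext ever appears in the real scheme. In the reduction, the $j$-th challenge copy becomes a garbling of $D[m_0,m_1]$, whose quantum input is the $j$-th copy of the uncloneable-bit challenge. Both circuits output $m_b$, so DQRE security with quantum side information (covering the other copies) makes the swap undetectable.

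The second gap is the one you flag yourself and do not close. If the PRU key is the long-term secret, it is handed to every $A_i$, so neither PRU security nor path-recording simulation is available once the game ends. The reduction must also prepare the wrapper of $\sk_{\mathrm{ot}}$ before it learns $\sk_{\mathrm{ot}}$. It then cannot produce a single classical $\sk$ for a fixed unitary family that decodes that wrapper correctly, so ``simulating the PRU key'' does not work.

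The missing idea is a way to defer this choice. In the paper, the classical labels $\lab_{i,b}$ for the bits of $\dk^1$ sit in $2\ell$ slots encrypted under reusable SKE keys indexed by $(i,b)$, and only $\dk^{\re}_{i,\dk^1_i}$ belongs to the decryption key. The reduction fills \emph{both} slots with genuine labels. After the split, once it learns $\dk^1$, it reveals exactly the keys $\dk^{\re}_{i,\dk^1_i}$. Replacing the genuine labels in the unused slots by dummies is justified by IND-CPA precisely because those keys are never revealed. Pseudorandom ciphertexts then let the unused slots be random strings, which gives normal form (\Cref{cor:pseudorandom}).

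Identical-copy security is a separate, generic step in which the PRU key is \emph{per-ciphertext randomness} and is never revealed. The ciphertext is $(I_{\Reg{A}}\otimes (U_k)_{\Reg{B}})$ applied to a purification of the mixed ciphertext. By PRU security and the purification channel (\Cref{thm:purechannel}), $t$ identical copies are indistinguishable from $\Sim_t(\rho^{\otimes t})$, which the reduction computes from its $t$ i.i.d.\ copies. Since $\Gen$ is unchanged, normal form is preserved.
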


\subsection{Techniques}

These results follow from two main compilers, which we first introduce below and then describe in some more detail. While each has appeared in some form in the literature previously, we adapt and generalize the techniques in order to establish the theorems stated above.

\paragraph{Plaintext expansion via DQRE.} \cite{TCC:HKNY24} introduced a compiler that bootstraps one-time UE for bits to one-time UE for any length of message, assuming a type of quantum garbling scheme called decomposable quantum randomized encoding (DQRE). In this work, we propose a variant of the compiler that works in the more general $t \to t'$ setting, and observe that the notion of DQRE required to instantiate the compiler exists from \emph{any} reusable SKE scheme. We further establish that our modified compiler directly achieves \textit{reusable} $t \to t'$ security, yielding our first main theorem (\Cref{thm:from-reusable-SKE}).

The most straightforward application of this compiler yields a scheme that is \emph{not} normal form. We show how to further tweak the compiler to obtain a \emph{normal form} reusable $t \to t'$ UE scheme, additionally assuming the existence of reusable SKE with \emph{pseudorandom ciphertexts}, which follows from pseudorandom unitaries (PRUs), or even pseudorandom function-like states \cite{C:AnaQiaYue22}. These results are presented in \Cref{sec:plaintext-expansion}.


\paragraph{Identical copy security via the purification channel.} Finally, we consider recent results establishing the feasibility of \emph{identical copy security} \cite{EPRINT:AnaGol25,EPRINT:CGKNY25} that are based on the existence of one-way functions. Building on recent results in quantum learning theory~\cite{tang2026conjugatequerieshelp,pelecanos2025mixedstatetomographyreduces,girardi2026randompurificationchannelsimple}, we show that the assumption underlying these approaches can be weakened to the existence of PRUs. This allows us to obtain reusable UE with identical copy security (that is still normal form) assuming an uncloneable bit and PRUs, establishing our second main theorem (\Cref{thm:from-PRU}). This is detailed in \Cref{sec:purification}.

\subsubsection{Sketch of plaintext expansion}

As mentioned above, \cite{TCC:HKNY24} shows how to construct $1\to 2$ encryption for $\secpa$-bit messages from a $1\to 2$ uncloneable bit using a tool known as a decomposable quantum randomized encoding (DQRE). Essentially, DQRE is a procedure that takes a circuit $C$ and (potentially quantum) inputs $x_1,\dots,x_{\ell}$ and produces labels $\widehat{C},\widehat{x}_1,\dots,\widehat{x}_\ell$ such that given $\widehat{C},\widehat{x}_1,\dots,\widehat{x}_\ell$, it is possible to recover $C(x)$. Furthermore, $\widehat{C},\widehat{x}_1,\dots,\widehat{x}_\ell$ together should reveal nothing about $C$ or $x$ besides $C(x)$ (and the size of the circuit $C$). To satisfy decomposability, we further require that each of the $\widehat{x}_i$ should depend only on $x_i$ and not on any of the other (qu)bits of $x$.

To build a (one-time) $1\to 2$ scheme for long messages using DQRE and a $1\to 2$ uncloneable bit,~\cite{TCC:HKNY24} gives the following construction. The encryption key (and also the corresponding decryption key) will be a one-time secure uncloneable encryption key $\udk$ as well as $\secpa$ one-time pads $\{R_i\}_{i \in [\secpa]}$. To encrypt a message $m$, first consider the circuit $C[m]$ that outputs $m$ on all inputs, and compute the decomposable randomized encoding $\widehat{C[m]}, \widehat{0}, \widehat{\udk}_1,\dots,\widehat{\udk}_\ell$, where $\widehat{0}$ is the label for a \emph{quantum} input on sufficiently many qubits set to 0. Then, for all $i\in [\secpa]$, set $c_{i,\udk_i} = R_i \xor \widehat{\udk}_i$ and $c_{i,1-\udk_i} \gets \{0,1\}^{\secpa}$. Finally, output \[\ct = \left(\widehat{C[m]}, \widehat{0}, \{\ct_{i,b}\}_{i\in [\secpa],b\in \{0,1\}}\right).\] To decrypt, recover $\widehat{\udk}_i = R_i\xor c_{i,\udk_i}$ and evaluate $\widehat{C[m]},\widehat{0},\widehat{\udk}_1,\dots,\widehat{\udk}_\ell$ to obtain $m$.

To see why this is secure, consider a pair of challenge messages $m_0,m_1$. The challenge ciphertext is then  \[\widehat{C[m]},\widehat{0},\{\ct_{i,b}\}_{i\in [\secpa],b\in \{0,1\}}.\] Define $D[m_0,m_1]$ to be the circuit that on input $(\rho,\udk)$ applies $\Dec(\udk,\rho)\to b$ and outputs $m_b$. Note that $D[m_0,m_1](\Enc(m_b),\udk) = C[m_b](0,\udk)$. And so by the security of the randomized encoding, the challenge ciphertext looks like $\widehat{D[m_0,m_1]},\widehat{\Enc(b)}$ followed by $\{\ct_{i,b}\}_{i\in [\secpa],b\in \{0,1\}}$.

But this challenge ciphertext can be simulated given a challenge in the uncloneable bit game (i.e. either $\Enc(0)$ or $\Enc(1)$), and so a cloning attack on the construction gives a cloning attack against the $1\to 2$ uncloneable bit.

\paragraph{Reducing assumptions to reusable SKE} We first observe that DQRE can be constructed generically from reusable SKE. This follows by combining~\cite{C:AnaQiaYue22}, which shows that reusable SKE can be used to build decomposable randomized encodings for \emph{classical} circuits, and~\cite{STOC:BraYue22}, which constructs DQRE from decomposable classical randomized encodings. Although this observation is folklore, as far as we are aware it has not been written explicitly in the literature. This gives the following immediate corollary.

\begin{corollary}
    Assuming the existence of a $1\to 2$ uncloneable bit and reusable SKE, there exists one-time secure $1\to 2$ UE for any message length.
\end{corollary}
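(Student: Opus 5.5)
The plan is to chain together three ingredients: reusable SKE gives decomposable classical randomized encodings, these give DQRE, and DQRE plus an uncloneable bit gives long-message UE via the compiler of \cite{TCC:HKNY24}.

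\textbf{Step 1: classical decomposable randomized encodings.} By \cite{C:AnaQiaYue22}, reusable SKE yields computationally secure decomposable randomized encodings (garbled circuits) for classical circuits. I would check one point here. Yao garbling normally uses a PRF or one-way function, whereas \cite{C:AnaQiaYue22} instantiates it with a reusable SKE that has suitable properties. Any gap in ciphertext properties, such as detectable decryption failure, would be closed by standard transformations that use only SKE.

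\textbf{Step 2: lift to DQRE.} Invoke \cite{STOC:BraYue22}, which compiles any decomposable classical randomized encoding into a decomposable quantum randomized encoding for quantum circuits with quantum inputs. The resulting encoding has labels $\widehat{C},\widehat{x}_1,\dots,\widehat{x}_\ell$. Each label $\widehat{x}_i$ depends only on $x_i$, and the labels for the classical input bits are classical strings. The step I expect to be the main obstacle is confirming that the Brakerski–Yuen construction uses its classical encoding only as a black box. If it does, the classical encoding from Step 1 can be substituted in without further assumptions. I would also confirm that the simulator depends only on $C(x)$ and the size of $C$.

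\textbf{Step 3: apply the \cite{TCC:HKNY24} compiler.} Instantiate the construction sketched above with this DQRE and a $1\to 2$ uncloneable bit with key $\udk$. The key is $\udk$ together with the pads $\{R_i\}$. A ciphertext consists of $\widehat{C[m]}$, the label $\widehat{0}$, and the pairs $c_{i,b}$: the pair entry $c_{i,\udk_i}$ equals $R_i\xor\widehat{\udk}_i$ and the other entry is uniformly random. Correctness follows directly from DQRE correctness.

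For security, I would argue through a sequence of hybrids. First, by DQRE simulation security, replace the encoding of $C[m_b]$ on $(0,\udk)$ with a simulated encoding of the output $m_b$. Since $D[m_0,m_1]$ on input $(\Enc(\udk,b),\udk)$ also outputs $m_b$, the same simulator lets us switch to an encoding of $D[m_0,m_1]$ on $(\Enc(\udk,b),\udk)$, provided the two circuits are padded to equal size.

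In this last hybrid, the only quantum component that depends on $b$ is the label $\widehat{\Enc(\udk,b)}$. A reduction receiving a challenge $\rho$ from the uncloneable-bit game can compute this label itself, because decomposability means the label needs only $\rho$ and the encoding randomness. The labels $\widehat{\udk}_i$ are hidden behind the pads $R_i$, so the reduction can place uniformly random strings at both positions of each pair and later reveal them as consistent once $\udk$ is given in the decryption phase. Concretely, it sets $R_i$ so that $c_{i,\udk_i}$ decrypts to the correct label. The two cloning adversaries therefore become adversaries against the uncloneable bit, with advantage loss bounded by the DQRE distinguishing advantage, which is negligible.
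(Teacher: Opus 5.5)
Your proposal is correct and is essentially the paper's argument. The paper gets this corollary the same way: it composes \cite{C:AnaQiaYue22} and \cite{STOC:BraYue22} to obtain DQRE from reusable SKE, plugs that into the \cite{TCC:HKNY24} compiler, and uses the same switch from $\widehat{C[m_b]}$ on $(0,\udk)$ to $\widehat{D[m_0,m_1]}$ on $(\Enc(b),\udk)$. Your handling of the pads (sampling $c_{i,0},c_{i,1}$ uniformly and setting $R_i = c_{i,\udk_i}\xor\lab_{i,\udk_i}$ once $\udk$ is revealed) is a perfect simulation, and it makes explicit what the paper's sketch leaves implicit.

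One property you should state outright: the classical labels $\Lab^c_i(1^\secpa,x_i,r)$ must depend only on $x_i$ and the classical randomness $r$, not on the quantum resource state. This matters because each of the two separated decryptors must recover $\lab_{i,\udk_i}$ only after learning $\udk$. The paper builds this into its DQRE definition and notes that \cite{STOC:BraYue22} satisfies it.
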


\paragraph{Lifting to $t\to t'$ security}

Unfortunately, this construction does not work in the $t\to t'$ setting for $t > 1$. This is due to the fact that the encryption key contains one-time pads, which cannot be reused. We resolve this with a natural approach: replace the one-time pad with a reusable secret key encryption scheme. Since secret key encryptions are possibly distinguishable from random strings, we also need to replace the random values $c_{i,1-\udk_i}$ with encryptions of $0$ under some fixed key.

Our resulting scheme thus involves a one-time encryption key $\udk$ as well as $2\secpa$ additional reusable encryption keys $\{\ek_{i,b}\}_{i\in [\ell],b\in \{0,1\}}$. The decryption key consists of $\udk$ as well as the decryption keys indexed by the bits of $\udk$: $\dk_{1,\udk_1},\dots,\dk_{\ell,\udk_\ell}$. To encrypt, compute the DQRE $\widehat{C[m]},\widehat{0},\widehat{\udk}_1,\dots,\widehat{\udk}_\ell$. Then, for all $i\in [\secpa]$, set $c_{i,\udk_i} = \Enc(\ek_{i,\udk_i}, \widehat{\udk}_i)$ and $c_{i,1-\udk_i} = \Enc(\ek_{i,1-\udk_i},0)$. Finally, output \[\ct = \left(\widehat{C[m]}, \widehat{0}, \{\ct_{i,b}\}_{i\in [\secpa],b\in \{0,1\}}\right).\]

As before, decryption follows by recovering $\widehat{\udk}_i = \Dec(\dk_{i,\udk_i}, c_{i,\udk_i})$ and evaluating $\widehat{C[m]},\widehat{0},\widehat{\udk}_1,\dots,\widehat{\udk}_\ell$. Security also follows from a similar argument.

It turns out that we also obtain many-time security for free using this compiler. In particular, an adversary for the uncloneable bit game simulating an adversary for this construction can simulate encryption queries without making use of the decryption key $\udk$. This can be done by setting for all $i\in [\ell],b\in \{0,1\}$, $\ct_{i,b} = \Enc(\ek_{i,b}, \widehat{b})$. When $b= \udk_i$, this is exactly of the right form, and when $b\neq \udk_i$, $\dk_{i,b}$ is never revealed and so this is indistinguishable from the honest ciphertext produced by the construction.

\begin{theorem}
    For any $t'>t$, assuming the existence of a $t\to t'$ uncloneable bit and reusable SKE, there exists reusable $t\to t'$ UE.
\end{theorem}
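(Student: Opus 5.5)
We prove the theorem with the construction just sketched. Keys: a $t\to t'$ uncloneable-bit key $\udk\in\{0,1\}^\ell$ (write $\ell$ for its length) and $2\ell$ reusable SKE keys $\{\ek_{i,b}\}$. The decryption key is $\dk=(\udk,\{\dk_{i,\udk_i}\}_i)$. Encryption of $m$ computes a fresh DQRE $\widehat{C[m]},\widehat{0},\widehat{\udk}_1,\dots,\widehat{\udk}_\ell$, where the $\widehat{0}$ slot is the quantum input wire sized for one uncloneable-bit ciphertext. It then sets $c_{i,\udk_i}=\Enc(\ek_{i,\udk_i},\widehat{\udk}_i)$ and $c_{i,1-\udk_i}=\Enc(\ek_{i,1-\udk_i},0)$. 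The DQRE is instantiated from reusable SKE by combining \cite{C:AnaQiaYue22} (classical decomposable RE from SKE) with \cite{STOC:BraYue22} (DQRE from classical decomposable RE). Correctness follows directly from SKE and DQRE correctness.

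For security, fix an adversary $\A$ against reusable $t\to t'$ UE. $\A$ makes polynomially many encryption queries, receives $t$ independent encryptions of $m_\beta$, and splits them into $t'$ parties. Each party receives $\dk$ and must output $\beta$. We proceed through hybrids.

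\textbf{Hybrid 1.} Change every ciphertext, both queries and challenges, so that for $b\neq\udk_i$ we set $c_{i,b}=\Enc(\ek_{i,b},\widehat{b})$ instead of $\Enc(\ek_{i,b},0)$. Here $\widehat{b}$ is the label of the DQRE input wire $i$ on value $b$, taken from the same encoding. This is a sequence of $\ell$ reductions to multi-message SKE security under the keys $\ek_{i,1-\udk_i}$. The decryption keys for these key indices are never given to any party, so the reduction can sample all other keys itself, including $\udk$. A step needing care: the reduction must output a distinguishing bit, yet the $t'$ parties act separately. We handle this by letting the reduction run $\A$ and all $t'$ parties, then output the predicate ``all parties guess $\beta$ correctly,'' which is efficiently computable given $\beta$. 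Because the cloning success probability is a single efficient predicate, every hybrid step is an ordinary computational indistinguishability step. After this hybrid, ciphertexts are generated symmetrically in $b$ and depend on $\udk$ only through the DQRE input $\udk$.

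\textbf{Hybrid 2.} For each of the $t$ challenge ciphertexts, switch to a DQRE of $D[m_0,m_1]$ on input $(\rho_j,\udk)$, where $\rho_j\gets\Enc_{\uni}(\udk,\beta)$ is a fresh uncloneable-bit encryption. Since $D[m_0,m_1](\rho_j,\udk)=m_\beta=C[m_\beta](0,\udk)$ and the circuits are padded to equal size, DQRE simulation security makes this indistinguishable. This requires two further intermediate steps: first replace the encoding by $\Sim$, then re-encode. DQRE security is invoked per ciphertext, with randomness independent across the $t$ challenges, so a standard $t$-step hybrid applies. Decomposability is essential here. The labels for $\udk$ appear only as $\widehat{b}$ for both values $b$ inside SKE ciphertexts, so we need simulation of the form ``labels for all wire values''. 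If the DQRE offers only labels of the true input, then in Hybrid 1 we should instead encrypt, under the key $\ek_{i,1-\udk_i}$, the label of $1-\udk_i$ from the same garbling only after first arguing that it is hidden. I expect the cleanest route is to apply DQRE security with the actual input $\udk$: the labels $\widehat{1-\udk_i}$ are never exposed until after the SKE switch, and in the final reduction they are generated by the reduction itself. This ordering of hybrids is the main obstacle, and it has to be stated carefully.

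\textbf{Final reduction.} The hybrid experiment can now be simulated by a $t\to t'$ uncloneable-bit adversary. It receives $t$ challenge states $\rho_1,\dots,\rho_t$ encrypting $\beta$, samples all SKE keys, and answers encryption queries with honest DQREs. These use both labels $\widehat{0},\widehat{1}$ on each $\udk$ wire, encrypted under $\ek_{i,0},\ek_{i,1}$, so $\udk$ is not needed. It builds challenges by garbling $D[m_0,m_1]$ with the quantum input $\rho_j$, again placing both $\udk$-wire labels under the corresponding keys. At splitting time each party receives $\udk$ and forwards $(\udk,\dk_{i,\udk_i})$ to $\A$'s corresponding party. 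Its success therefore equals $\A$'s success in the last hybrid, which completes the argument.
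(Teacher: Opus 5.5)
\textbf{There is a genuine gap: the order of your hybrids breaks the DQRE step.} Your construction, your final reduction, and your two ingredients are the paper's. The ingredients are IND-CPA under the off-path keys $\ek_{i,1-\udk_i}$, and DQRE security to swap $C[m_\beta]$ on $(0,\udk)$ for $D[m_0,m_1]$ on $(\rho_j,\udk)$.

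After your Hybrid~1, each challenge ciphertext contains $\Enc(\ek_{i,1-\udk_i},\widehat{1-\udk_i})$. Here $\widehat{1-\udk_i}$ is the off-path label of the very encoding you then switch in Hybrid~2. A reduction to DQRE security is handed only $\widehat{C}$, the quantum labels, and the labels of the true classical input $\udk$ (or a simulation of these). It therefore cannot produce those ciphertexts, and the step from Hybrid~1 to Hybrid~2 does not go through. The ``simulation with labels for all wire values'' you say you need is not provided by the definition. It cannot be expected of any secure encoding either, since holding both labels of a wire lets one evaluate the encoded circuit on other inputs. You flag this as the main obstacle but never commit to a resolution, so as written the proof has a failing step.

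\textbf{The fix is to reverse the order, which is exactly what the paper does.} First, start from the real game, where the off-path slots encrypt $0\cdots0$. Apply DQRE security (\Cref{thm:simpdqre}) to switch each of the $t$ challenges to an encoding of $D[m_0,m_1]$ on $(\rho_j,\udk)$, with $\rho_j$ a fresh uncloneable-bit encryption of $\beta$. Here only true-input labels appear. The reduction can generate the queries and the other challenges itself, because it samples the uncloneable-bit keys and hence knows $\udk$. Correlations among the challenges are absorbed by the side-information register $\Reg{E}$. This yields the paper's $\Hyb$.

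Second, only then invoke IND-CPA under the keys $\ek_{i,1-\udk_i}$, in queries and challenges alike. This replaces $\Enc(\ek_{i,1-\udk_i},0\cdots0)$ by encryptions of the off-path labels. This reduction samples the uncloneable-bit keys, the $\rho_j$, and the full garblings itself, so it holds both labels. The resulting game is precisely the uncloneable-bit adversary's simulation. Equivalently, you could keep your Hybrid~1 for queries only, then do the DQRE switch on challenges, then the SKE switch on challenges.

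One small point in the final reduction: the uncloneable-bit adversary does not know $\udk$ when it splits the state. It must therefore place all $2\ell$ SKE decryption keys in every party's register, so that each party can select $\dk_{i,\udk_i}$ after receiving $\udk$.
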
 

\paragraph{A normal form construction.} However, this protocol is not of normal form. Even if the reusable encryption scheme is of normal form and has $\ek_{i,b}=\dk_{i,b} \eqqcolon \sk_{i,b}$, we still are not able to reveal $\sk_{i,1-\udk_i}$ to the adversary. We would thus like encryption to not make use of $\sk_{i,1-\udk_i}$ at all. We can do this by ensuring that the reusable encryption scheme used is \textit{pseudorandom}, i.e. honest encryptions should be indistinguishable from the maximally mixed state. In this case, the encryption protocol can sample a random string instead of computing $\Enc(\sk_{i,1-\udk_i},0)$. Setting $\ek=\dk=(\udk,\sk_{1,\udk_1},\dots,\sk_{\ell,\udk_\ell})$ gives a normal form reusable $t\to t'$ secure uncloneable encryption scheme from a $t\to t'$ uncloneable bit along with normal form, pseudorandom, reusable SKE.

\begin{theorem}\label{thm:intropseudorandom}
    For any $t'>t$, assuming the existence of a $t \to t'$ uncloneable bit and normal form, pseudorandom, reusable SKE, there exists normal form, reusable $t\to t'$ UE.
\end{theorem}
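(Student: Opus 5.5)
The plan is to instantiate the compiler sketched above with the normal-form pseudorandom SKE $(\Gen,\Enc,\Dec)$ and reduce cloning attacks on it to cloning attacks on the uncloneable bit. The key consists of a one-time uncloneable-bit key $\udk\in\{0,1\}^\ell$ together with SKE keys $\sk_{1,\udk_1},\dots,\sk_{\ell,\udk_\ell}$, and we set $\ek=\dk=(\udk,\{\sk_{i,\udk_i}\}_i)$. To encrypt $m$:
\begin{enumerate}[(i)]
\item compute the DQRE $\widehat{C[m]},\widehat{0},\widehat{\udk}_1,\dots,\widehat{\udk}_\ell$ of the constant circuit (the DQRE exists from reusable SKE, as observed above);
\item set $c_{i,\udk_i}\gets\Enc(\sk_{i,\udk_i},\widehat{\udk}_i)$;
\item set $c_{i,1-\udk_i}$ to a uniformly random string (or maximally mixed state) of the appropriate ciphertext length.
\end{enumerate}
Correctness follows from the correctness of the SKE and of the DQRE. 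The DQRE must be taken with respect to a circuit $D$ that internally runs the uncloneable-bit decryption, as explained next.

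Security proceeds by hybrids in the $t\to t'$ cloning game. The adversary holds $\ek$ only in the second phase (the $t'$ parties each receive the key). It may also make encryption queries before the challenge.
\begin{enumerate}
\item[\textbf{Step 1.}] Make the encryption oracle and the challenge independent of $\udk$. Sample fresh keys $\sk_{i,1-\udk_i}$ and replace every random string $c_{i,1-\udk_i}$, in both oracle responses and all $t$ challenge ciphertexts, by $\Enc(\sk_{i,1-\udk_i},\widehat{b})$. Here $\widehat b$ is the appropriate input label, so every slot now becomes $c_{i,b}=\Enc(\sk_{i,b},\widehat{b}_i)$ with labels for both bit values. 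This step is justified by the many-time pseudorandomness of the SKE under keys $\sk_{i,1-\udk_i}$. These keys are never given to anyone, even in the second phase, because the decryption key reveals only $\sk_{i,\udk_i}$. The reduction can therefore hand out $\ek$ honestly at the end. The hybrid is over $\ell$ keys and polynomially many ciphertexts.
\item[\textbf{Step 2.}] For each of the $t$ challenge ciphertexts, use DQRE simulation security to replace $(\widehat{C[m_\beta]},\widehat{0},\widehat{\udk})$ by $(\widehat{D[m_0,m_1]},\widehat{\rho})$. Here $\rho$ is a fresh uncloneable-bit encryption of $\beta$, and $D[m_0,m_1](\rho,\udk)=m_{\Dec(\udk,\rho)}$. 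Both encodings have identical output $m_\beta$, so this replacement is sound.
\end{enumerate}
After these steps, each challenge ciphertext is a function of one bit ciphertext $\rho_j$ together with labels for both values of each key bit, encrypted under $\sk_{i,0}$ and $\sk_{i,1}$. The $t$ challenges are exactly $t$ independent encryptions of $\beta$, matching the mixed-state $t\to t'$ bit game. A reduction receiving $\rho_1,\dots,\rho_t$ samples all $\sk_{i,b}$ itself and builds the ciphertexts and oracle answers without knowing $\udk$. When the bit game reveals $\udk$ in the second phase, it forwards $(\udk,\{\sk_{i,\udk_i}\})$ to each of the $t'$ parties. Their success in guessing $\beta$ then transfers directly to the bit game.

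The main obstacle is ordering the hybrids so that every computational step occurs while the relevant secret is genuinely hidden from the distinguisher. The difficulty is that the cloning game's success event is a joint $t'$-party guess made after key revelation, so each computational hybrid must be implemented by a single efficient distinguisher that runs the whole splitting procedure and all $t'$ second-phase adversaries. Step 1 is fine because $\sk_{i,1-\udk_i}$ is never revealed. Step 2 is more delicate: the DQRE simulator needs both labels of each $\udk_i$, but the parties later learn $\sk_{i,\udk_i}$ and hence can decrypt only the label $\widehat{\udk}_i$. I would first try to argue that after Step 1 the view depends only on the label set $\{\widehat{\udk}_i\}$ in a way consistent with the simulator output. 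If that fails, I would perform the DQRE step per ciphertext, before key revelation, with the distinguisher sampling all keys and running all $t'$ adversaries, using that the simulated encoding still decrypts correctly under revealed keys.
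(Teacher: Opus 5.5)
Your construction and final reduction match the paper's, but the hybrid order has a genuine gap at Step~2.

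After your Step~1, every challenge ciphertext contains $\Enc(\sk_{i,b},\cdot)$ of the label for \emph{both} values $b\in\{0,1\}$ of each bit of $\udk$. DQRE security (\Cref{thm:simpdqre}) only concerns the view $\widehat{C},\widehat{\rho},\widehat{x}$ for a single input $x$. A distinguisher built from your Step~2 hybrid receives only the labels of $\udk$ for the garbling it is challenged on. It therefore cannot produce the slots that encrypt the labels of $1-\udk_i$. No such guarantee could hold anyway once labels of two inputs are available: the encoding could then be evaluated at some $\udk'\neq\udk$, where $C[m_\beta]$ and $D[m_0,m_1]$ can differ. So Step~2, as ordered, is not a valid application of DQRE security.

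Your second fallback does not repair this. The obstacle is not key revelation or what the $t'$ parties can later decrypt; it is that the DQRE experiment never hands out the labels for $1-\udk_i$. Your first fallback (showing the view depends only on $\{\widehat{\udk}_i\}$) points the right way, but it is exactly the missing hybrid.

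The fix, which is what the paper does, is to apply the DQRE switch \emph{before} the labels for $1-\udk_i$ enter the challenge ciphertexts. Writing $\dk^1$ for your $\udk$, the paper proceeds in three steps.
\begin{enumerate}
\item It uses pseudorandomness, under the never-revealed keys at positions $1-\dk^1_i$, to move from the normal-form scheme to the scheme of \Cref{thm:expansion}, whose wrong slots encrypt $0\cdots0$.
\item With only the labels of $\dk^1$ present, it replaces each challenge's $(\widehat{C[m_\beta]},\widehat{0},\widehat{\dk^1})$ by $(\widehat{D[m_0,m_1]},\widehat{\rho_j},\widehat{\dk^1})$ via \Cref{thm:simpdqre}. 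The other challenges and the adversary's state are treated as quantum side information. This yields $\Hyb$.
\item Only at the end does it use IND-CPA to put the labels of $1-\dk^1_i$ into those slots, reaching the reduction's game.
\end{enumerate}

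In your write-up you can keep Step~1 for the oracle answers. Their garblings are independent of the challenges, so the DQRE distinguisher can generate both labels there itself. For the challenge ciphertexts, postpone Step~1 until after Step~2, and go directly from random strings to encryptions of the $1-\udk_i$ labels by pseudorandomness.
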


\subsubsection{Sketch of identical copy security}

Recent work has shown how to construct identical copy secure uncloneable encryption schemes from regular uncloneable encryption along with one-way functions~\cite{EPRINT:AnaGol25,EPRINT:CGKNY25}. A key result of these works is encapsulated in the following lemma.
\begin{lemma}[Theorem 1.1 from~\cite{EPRINT:AnaGol25}]\label{lem:simlem}
    If one-way functions exist, then there exists a quantum channel $\Sim_t$ such that for all efficiently samplable mixed states $\sigma$ there exists a family of efficiently samplable purifications $\{\ket{\phi_k}\}$ such that
    \begin{enumerate}
        \item $t$ copies of a random sample $\ket{\phi_k}$ can be efficiently approximated using $t$ i.i.d. copies of $\sigma$. That is, $$\Sim_t(\sigma^{\otimes t}) \approx \E_k\left[\ketbra{\phi_k}^{\otimes t}\right].$$
        \item $\{\ket{\phi_k}\}$ is a purification of $\sigma$. That is,
        $$\sigma \approx \E_k\left[\Tr_{\Reg{B}}(\ketbra{\phi_k}_{\Reg{A}\Reg{B}})\right].$$
    \end{enumerate}
    where here $\approx$ refers to computational indistinguishability.
\end{lemma}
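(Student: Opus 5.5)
This lemma is imported from \cite{EPRINT:AnaGol25}. Still, I would reconstruct it via the purification channel, using pseudorandom functions, which follow from one-way functions.

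\textbf{Step 1: the purifications.} Write $\sigma = \E_{r}\left[\ketbra{\psi_r}\right]$, where $r$ is the sampling randomness, so $\ket{\psi_r}$ is the pure state the efficient sampler produces on coins $r$. The target purification is the random-phase state
\[
\ket{\phi_k} \propto \sum_{r} (-1)^{f_k(r)} \ket{\psi_r}_{\Reg{A}}\ket{r}_{\Reg{B}},
\]
where $f_k$ is a PRF. If $\sigma$ is generated with coins in superposition, the ideal version with a truly random function is exactly a purification. Property~2 is immediate: tracing out $\Reg{B}$ kills the phases, because the $\ket{r}$ are orthogonal. This holds even for a pseudorandom $f_k$, so property~2 in fact holds with equality. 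If $r$ has non-uniform amplitudes, I would use amplitudes $\sqrt{p(r)}$.

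\textbf{Step 2: the simulator.} Replacing $f_k$ by a truly random function $f$ is computationally indistinguishable given $t$ copies, since the copies can be prepared with oracle access to $f_k$. So it suffices to show
\[
\Sim_t(\sigma^{\otimes t}) \approx \E_f\left[\ketbra{\phi_f}^{\otimes t}\right].
\]
By the standard path-recording / random-phase argument, $\E_f\left[\ketbra{\phi_f}^{\otimes t}\right]$ is close to the state obtained by taking $t$ i.i.d. samples with distinct coins $r_1,\dots,r_t$ and symmetrizing the $\Reg{B}$ registers. Concretely, with a random function, off-diagonal terms survive only when the multisets $\{r_i\}$ and $\{r'_i\}$ agree as multisets, up to phase-parity collisions. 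Hence the state equals $\sum$ over multisets $S$ of the projection onto the symmetrized state $\sum_{\pi}\bigotimes_i \ket{\psi_{r_{\pi(i)}}}\ket{r_{\pi(i)}}$.

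\textbf{Step 3: building $\Sim_t$.} The simulator must produce this symmetric state from $\sigma^{\otimes t}$ without knowing the $r_i$. The plan is to purify each copy of $\sigma$ by appending a fresh register and applying the ``random purification channel.'' This consists of applying a Haar-random (or, efficiently, a $t$-design-like) unitary $U$ on the purifying registers of $\Reg{B}^{\otimes t}$ together with a symmetric-subspace projection. By Schur--Weyl duality, this produces exactly $\E_{\ket{\phi}}[\ketbra{\phi}^{\otimes t}]$ over uniformly random purifications of $\sigma$. The resulting mixture of random purifications coincides with the phase-state mixture above, because both are the unique $U(\Reg{B})$-invariant $t$-copy extension.

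\textbf{Main obstacle.} The hard step is making this channel efficient, since the Schur transform on the purifying side is needed. I would handle it by implementing the symmetrization over the $t$ copies with a weak Schur sampling / permutation-register circuit, which is polynomial in $t$ and the dimension label. The remaining Haar unitary on $\Reg{B}$ would be replaced by pseudorandom phases plus a random permutation, which is computationally indistinguishable under OWFs. Then I would check that the error terms from collisions among the $r_i$ are $O(t^2/2^{|r|})$.
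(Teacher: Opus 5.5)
\textbf{There is a genuine gap.} The paper does not prove this lemma; it imports it from \cite{EPRINT:AnaGol25}. What it proves instead is the PRU analogue: $\Sim_t$ is the random purification channel of Theorem~\ref{thm:purechannel}, and the family is $\{(I_{\Reg{A}}\otimes (U_k)_{\Reg{B}})\ket{\phi}\}$, so property~1 follows directly from PRU security.

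Your Step~3 channel outputs the Haar-random purification mixture, but that is not the mixture of your Step~1 family. The bridge you use between them is false. The random-phase mixture is not $U(\Reg{B})$-invariant; it is invariant only under phase flips that are diagonal in the coin basis. So there is no ``unique invariant extension'' to appeal to. Already for $t=1$,
\[
\E_f\left[\ketbra{\phi_f}\right]=\sum_r p(r)\,\ketbra{\psi_r}_{\Reg{A}}\otimes\ketbra{r}_{\Reg{B}},
\qquad\text{whereas}\qquad
\Sim_1(\sigma)=\sigma_{\Reg{A}}\otimes \frac{I_{\Reg{B}}}{M}.
\]
In the first state, $\Reg{B}$ reveals the coins that produced $\Reg{A}$. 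A distinguisher that measures $\Reg{B}$ and then checks $\Reg{A}$ against $\ket{\psi_r}$ accepts with probability $1$. On the channel's output it accepts with probability only $1/2$ when, for example, $\sigma=I/2$ is sampled as a uniformly random BB84 state.

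The underlying problem is structural. Your phase-only family depends on the sampler, not just on $\sigma$: two samplers of the same $\sigma$ give efficiently distinguishable mixtures. A universal $\Sim_t$, however, sees only $\sigma^{\otimes t}$. So property~1 cannot hold for this family.

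\textbf{The missing idea is to hide the coins in the purifying register.} Your last paragraph has the right ingredients but applies them to the simulator instead of the family. For example, take
\[
\ket{\phi_k}=(I_{\Reg{A}}\otimes P_{\pi_k}Z_{f_k})\ket{\Phi},
\]
where $\pi_k$ is a quantum-secure PRP with inverse access, so that $P_{\pi_k}$ is an efficient unitary; such PRPs follow from OWFs. Also pad the coin register so that collisions among $t$ samples are negligible. Your Step~2 computation, followed by averaging over $\pi$, then shows that the $t$-copy mixture is close to a specific state. That state is $\sigma^{\otimes t}$ with $t$ distinct uniformly random tags attached and the $t$ pairs coherently symmetrized. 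This depends on $\sigma$ alone, and a universal simulator can prepare it directly, with no Schur transform: it samples the tags itself, and their distinctness lets it uncompute the permutation register.

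Your ``main obstacle'' is also misplaced. Theorem~\ref{thm:purechannel} already gives an efficient $\Sim_t$. The Haar unitary appears only in the target ensemble, never inside the simulator. Finally, pseudorandom phases plus a random permutation are indistinguishable from Haar only on inputs concentrated on the collision-free subspace. For instance, $P_\pi Z_f\ket{0}$ is a computational basis state. Restricting to that subspace is exactly what the padding guarantees.
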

This lemma intuitively means that for any cryptographic primitive with mixed state output, we can replace $t$ copies of mixed state output with $t$ copies of a random sample of the purification ensemble $\{\ket{\phi_k}\}$. This gives a generic approach for lifting ``many copy security" to ``identical copy security". In this note, we reduce the requirements for this lemma from one-way functions to pseudorandom unitaries.

Similar techniques have simultaneously appeared in the quantum learning literature~\cite{tang2026conjugatequerieshelp,pelecanos2025mixedstatetomographyreduces,girardi2026randompurificationchannelsimple}. While the simulator for these works is a bit more complex, the resulting purification is significantly simpler. The ``purification channel" stemming from the learning theory literature produces truly random purifications, not just random samples from some ensemble. 

In detail, these works show that there exists an efficient quantum channel $\Sim_t$ such that for any mixed state $\sigma_{\Reg{A}} = \Tr_{\Reg{B}}(\ketbra{\phi}_{\Reg{A}\Reg{B}}$,
$$\Sim_t(\sigma^{\otimes t}) = \E_{U\gets \Haar(M)}\left[\left((I_{\Reg{A}}\otimes U_{\Reg{B}})\ketbra{\phi}_{\Reg{A}\Reg{B}} (I_\Reg{A}\otimes U_\Reg{B}^\dagger)\right)^{\otimes t}\right].$$

Replacing $U$ with a pseudorandom unitary $U_k$ establishes that $\Sim_t(\sigma^{\otimes t})$ is indistinguishable from $U_k^{\otimes t}$ applied to the $\Reg{B}$ registers of $\ket{\phi}_{\Reg{A}\Reg{B}}^{\otimes t}$. That is,~\Cref{lem:simlem} holds under the assumption that pseudorandom unitaries exist. This immediately gives a new version of many of the transformations in~\cite{EPRINT:AnaGol25,EPRINT:CGKNY25} starting from the assumption that pseudorandom unitaries exist.

\begin{corollary}
    Suppose pseudorandom unitaries exist. Then
    \begin{enumerate}
        \item If public key quantum money exists, then multi-copy secure mini-schemes exist.
        \item For any class of functionalities $\mathcal{F}$, if there exists a copy-protection scheme for $\mathcal{F}$ satisfying i.i.d.-copy security, then there exists a copy-protection scheme for $\mathcal{F}$ satisfying identical-copy security.
        \item If there exists $t\to t'$ search-secure uncloneable encryption, then there exists $t\to t'$ identical-challenge search-secure uncloneable encryption.
    \end{enumerate}
\end{corollary}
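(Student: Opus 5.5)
The plan is to show that the PRU-based purification channel satisfies the conclusion of \Cref{lem:simlem}. Each item of the corollary then follows by running the existing transformations of \cite{EPRINT:AnaGol25,EPRINT:CGKNY25} unchanged, since those transformations use one-way functions \emph{only} through \Cref{lem:simlem}.

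\textbf{Step 1: the PRU version of \Cref{lem:simlem}.} Let $\sigma_{\Reg{A}}$ be efficiently samplable, generated by an efficient unitary that prepares a purification $\ket{\phi}_{\Reg{A}\Reg{B}}$; the sampler's coins and workspace form $\Reg{B}$. Define the ensemble
\[
\ket{\phi_k} = (I_{\Reg{A}}\otimes U_k)\ket{\phi},
\]
where $\{U_k\}$ is a PRU family on $\Reg{B}$. Each $\ket{\phi_k}$ is efficiently preparable. Since $U_k$ acts only on $\Reg{B}$, each $\ket{\phi_k}$ is exactly a purification of $\sigma$, so item 2 holds with equality.

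For item 1, take $\Sim_t$ to be the efficient purification channel from \cite{tang2026conjugatequerieshelp,pelecanos2025mixedstatetomographyreduces,girardi2026randompurificationchannelsimple}. It outputs $\E_{U\gets\Haar}\big[((I\otimes U)\ketbra{\phi}(I\otimes U^\dagger))^{\otimes t}\big]$. To compare this with $\E_k\big[\ketbra{\phi_k}^{\otimes t}\big]$, I reduce to PRU security. A distinguisher $\mathcal{D}$ between the two states gives a PRU adversary: prepare $\ket{\phi}^{\otimes t}$, query the oracle $t$ times on the $\Reg{B}$ registers, and run $\mathcal{D}$. This needs $t=\poly(\secpa)$ and efficient preparation of $\ket{\phi}$. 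One caveat is that some versions of the purification channel output a random purification on a register of a fixed dimension. I would pad $\Reg{B}$ with ancillas to that dimension so the Haar unitary and the PRU act on the same space.

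\textbf{Step 2: plug into the prior transformations.} I would inspect each proof in \cite{EPRINT:AnaGol25,EPRINT:CGKNY25}:
\begin{enumerate}
\item multi-copy secure mini-schemes from public key quantum money;
\item identical-copy copy-protection from i.i.d.-copy copy-protection;
\item identical-challenge search-secure UE from $t\to t'$ search-secure UE.
\end{enumerate}
In each, the compiled scheme outputs a sample $\ket{\phi_k}$ in place of the mixed state $\sigma$. Correctness holds because tracing out $\Reg{B}$ recovers $\sigma$. For security, the reduction feeds its $t$ i.i.d. challenge copies $\sigma^{\otimes t}$ through $\Sim_t$ to simulate $t$ identical copies of $\ket{\phi_k}$.

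Anything else used from one-way functions must be handled separately, for example a PRF used to derandomize $k$ into a classical key or to make the scheme's key short. Where that happens, I would replace it with the PRU key directly. PRUs imply PRS/PRFS-style objects, but not necessarily one-way functions, so any such usage has to be rerouted through $k$.

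\textbf{Main obstacle.} I expect the hardest part to be the interface in Step 2. I must confirm that every reduction invokes the lemma only in the black-box form of items 1--2, including computational indistinguishability that holds \emph{jointly} with the rest of the security experiment. For example, in copy-protection the adversary's view may include other key-dependent data. This requires that the PRU key $k$ is used nowhere else in the scheme and that the hybrid stays efficient. If some proof in \cite{EPRINT:AnaGol25} uses a specific structural property of its one-way-function-based purification ensemble, I would re-prove that step directly from the Haar-random form, which is a cleaner object to work with.
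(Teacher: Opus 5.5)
Your proposal matches the paper's argument. You replace the Haar unitary in the purification channel of \Cref{thm:purechannel} with a PRU, using forward queries only, to recover \Cref{lem:simlem} from PRUs, and then reuse the transformations of \cite{EPRINT:AnaGol25,EPRINT:CGKNY25} unchanged; your extra caveats about joint indistinguishability and other possible uses of one-way functions in those compilers go beyond the paper, which simply asserts that the transformations carry over.
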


Although the above-mentioned papers do not explicitly consider indistinguishable secure UE, we observe that the techniques do directly apply to this setting. In detail, the identical copy secure encryption algorithm will take in as its randomness a pseudorandom unitary key $k$. It will then set $\ket{\phi}_{\Reg{A}\Reg{B}}$ to be the purification of $\Enc(\ek,m)$, and output $(I_{\Reg{A}}\otimes (U_k)_{\Reg{B}}) \ket{\phi}_{\Reg{A}\Reg{B}}$. Now, challenge ciphertexts will be indistinguishable from $\Sim_t(\Enc(\ek,m)^{\otimes t})$, and so identical copy security follows from $t\to t'$ security of the underlying encryption scheme.

Note that if the underlying $t\to t'$ UE scheme is of normal form, so is the resulting identical copy secure protocol. Recalling that pseudorandom unitaries can be used to construct normal form, pseudorandom, reusable SKE~\cite{C:AnaQiaYue22}, we combine this technique with \Cref{thm:intropseudorandom} to establish the following.

\begin{corollary}\label{cor:introident}
    For any $t'>t$, assuming the existence of pseudorandom unitaries and an uncloneable $t\to t'$ bit, there exists reusable, normal form $t\to t'$ UE with identical copy security.
\end{corollary}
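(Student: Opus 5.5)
The plan is to compose the two compilers sketched above. First, \cite{C:AnaQiaYue22} gives normal form, pseudorandom, reusable SKE from pseudorandom unitaries: the key is a PRU key $k$, and encrypting $m$ samples fresh randomness $r$ and outputs $(r, U_k(\ket{m}\ket{r}))$, or a suitable variant. Many-time pseudorandomness of such ciphertexts follows from the adaptive PRU guarantee. Feeding this SKE and the $t\to t'$ uncloneable bit into \Cref{thm:intropseudorandom} yields a normal form, reusable $t\to t'$ UE scheme $(\Gen,\Enc,\Dec)$ for arbitrary-length messages, with $\ek=\dk=\sk$.

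Second, we apply the purification compiler. The new scheme keeps the key $\sk$ unchanged. To encrypt $m$, it samples a fresh PRU key $k$ and prepares a purification $\ket{\phi_{\sk,m}}_{\Reg{A}\Reg{B}}$ of $\Enc(\sk,m)$; this is possible by running $\Enc$ coherently and keeping its randomness and garbage in $\Reg{B}$. It then outputs $(I_{\Reg{A}}\otimes (U_k)_{\Reg{B}})\ket{\phi_{\sk,m}}$. Decryption discards $\Reg{B}$ and runs $\Dec(\sk,\cdot)$ on $\Reg{A}$. Correctness is immediate, since the reduced state on $\Reg{A}$ is exactly $\Enc(\sk,m)$. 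The scheme is normal form because the key is the same $\sk$. Encryption-oracle queries in the many-time game are also simulable from a mixed-state oracle: each query uses a fresh $k$, so the output is a single purified sample whose $\Reg{A}$ marginal is $\Enc(\sk,m)$. Such queries can be answered either by calling the underlying oracle and discarding $\Reg{B}$ before returning, or simply by running the new encryption procedure with the secret key that the reduction holds.

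For security, we use a hybrid argument on the cloning game with challenge messages $m_0,m_1$.
\begin{enumerate}
\item[$\Hyb_0$:] The adversary receives $t$ identical copies of $(I\otimes U_k)\ket{\phi_{\sk,m_b}}$.
\item[$\Hyb_1$:] $U_k$ is replaced by a Haar-random $U$. This step is justified by PRU security: the reduction samples $\sk$ itself, prepares $\ket{\phi_{\sk,m_b}}^{\otimes t}$, makes $t$ oracle queries on the $\Reg{B}$ registers, and runs the rest of the game, including the splitting into $t'$ parties and handing $\sk$ to each of them. Since the distinguisher may be the whole game, it is important that the reduction knows $\sk$.
\item[$\Hyb_2$:] By the purification-channel identity from \cite{tang2026conjugatequerieshelp,pelecanos2025mixedstatetomographyreduces,girardi2026randompurificationchannelsimple}, the challenge state of $\Hyb_1$ equals exactly $\Sim_t(\Enc(\sk,m_b)^{\otimes t})$. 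Here each $\Enc(\sk,m_b)$ is an independent mixed-state encryption, since $\Enc(\sk,m_b)$ denotes the averaged mixed state.
\end{enumerate}

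In $\Hyb_2$, a cloning adversary $\A$ becomes an adversary for the i.i.d.\ $t\to t'$ game of the underlying scheme. It receives $t$ independent encryptions, applies the efficient channel $\Sim_t$, forwards the result to $\A$, and passes through $\A$'s splitting. Its advantage therefore matches $\A$'s advantage in $\Hyb_2$, up to the negligible loss from $\Hyb_0\to\Hyb_1$ (noting that the exact $t\to t'$ success threshold is the same in every hybrid). Security of the scheme obtained from \Cref{thm:intropseudorandom} then finishes the argument.

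The main obstacle is the first PRU step. Two things must be checked there. First, the PRU must be replaced inside a multi-party game in which the key $\sk$ is revealed in the end. This is fine because the reduction samples $\sk$ itself and the PRU key $k$ is never revealed. Second, $\Sim_t$ must be efficient for the relevant dimension of $\Reg{B}$. The PRU is applied only on $\Reg{B}$, and that register must have polynomially many qubits, which holds since the purification register of an efficient encryption is polynomial. If needed, one can pad $\Reg{B}$ to the dimension on which the purification-channel identity is stated, since that identity requires $\Reg{B}$ to be at least as large as the rank of the state.
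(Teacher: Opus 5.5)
Your overall route is the paper's: get normal form, pseudorandom, reusable SKE from PRUs via \cite{C:AnaQiaYue22}, and combine it with the uncloneable bit through \Cref{cor:pseudorandom}. You then apply the purification compiler $(I_{\Reg{A}}\otimes (U_k)_{\Reg{B}})\ket{\phi_{\sk,m}}$ and argue PRU $\to$ Haar $\to$ $\Sim_t$. However, there is a genuine gap in how you handle the encryption oracle, which is exactly what ``reusable'' in the statement is about.

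Your final reduction is an adversary in the underlying $\CLONE$ game. It has only oracle access to $\Enc(\sk,\cdot)$, and $\sk$ goes only to the $t'$ parties at the end. So ``running the new encryption procedure with the secret key that the reduction holds'' is unavailable there; it is available only inside your PRU reduction. Your other option, calling the underlying oracle and ``discarding $\Reg{B}$'', does not work either. The underlying oracle returns only the mixed state $\Enc(\sk,m)$ on $\Reg{A}$, which is not a ciphertext of the new scheme. Moreover, your $\Hyb_1$ swaps only the challenge's PRU key. The oracle answers in $\Hyb_2$ are therefore still $(I\otimes U_{k'})\ket{\phi_{\sk,m}}$ for fresh PRU keys $k'$, and a reduction without $\sk$ cannot produce these from mixed-state samples. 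As written, your final reduction never specifies how queries are answered.

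The fix is what the paper does. In the PRU step, also replace every fresh key $k'$ used for an oracle answer by an independent Haar unitary. This is a hybrid over the polynomially many queries, and each step is a PRU reduction that samples $\sk$ itself. Then, in the final reduction, answer a query $m$ by obtaining $\rho\gets\Enc(\sk,m)$ from the underlying oracle and returning $\Sim_1(\rho)$. By the $t=1$ case of \Cref{thm:purechannel}, $\Sim_1(\rho)=\E_{U\gets\Haar(M)}\bigl[(I\otimes U)\ketbra{\phi_{\sk,m}}(I\otimes U^\dagger)\bigr]$, which is just $\rho\otimes I_{\Reg{B}}/M$. This is exactly the Haar-hybrid oracle answer. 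The paper packages this as the intermediate scheme $\Enc^{\uni}$, which uses Haar unitaries for challenge and oracle ciphertexts alike; its reduction returns $\Sim_1(\rho)$ on each query and $\Sim_t(\rho_b^{\otimes t})$ on the challenge. With this change the rest of your argument goes through, including your correct observation that the PRU reduction must know $\sk$ because the distinguisher is the whole multi-party game.
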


\section{Definitions}

Note that for all definitions, we will often omit the security parameter $\secpa$ when it is clear from context.
    
\begin{definition}
    Let $\ell_{\ek}(\secpa),\ell_{\dk}(\secpa),\ell_{m}(\secpa),\ell_{\ct}(\secpa):\N\to\N$. An encryption scheme with keys of length $\ell_{\ek},\ell_{\dk}$, messages of length $\ell_{m}$, and ciphertexts of length $\ell_{\ct}$ is a tuple of uniform quantum algorithms $\Gen,\Enc,\Dec$ with the following syntax.
    \begin{itemize}
        \item $\Gen(1^\secpa)\to \ek,\dk$ takes as input a unary security parameter $\secpa$ and outputs an encryption key $\ek\in \{0,1\}^{\ell_{\ek}(\secpa)}$ and a decryption key $\dk\in \{0,1\}^{\ell_{\dk}(\secpa)}$.
        \item $\Enc(1^\secpa,\ek,m) \to \rho_{k,m}$ takes as input a security parameter $\secpa$, an encryption key $\ek \in \{0,1\}^{\ell_{\ek}(\secpa)}$, and a message $m\in \{0,1\}^{\ell_m(\secpa)}$, and outputs a mixed state $\rho_{\ek,m} \in \mathcal{H}(\{0,1\}^{\ell_{\ct}(\secpa)})$.
        \item $\Dec(1^\secpa, \dk, \rho) \to m$ takes as input a security parameter $\secpa$, a decryption key $\dk \in \{0,1\}^{\ell_{\dk}(\secpa)}$, and a quantum ciphertext $\rho \in \mathcal{H}(\{0,1\}^{\ell_{\ct}(\secpa)})$, and outputs a decrypted message $m \in \{0,1\}^{\ell_m(\secpa)}$.
    \end{itemize}
\end{definition}

\begin{definition}
    We say that $(\Gen,\Enc,\Dec)$ is correct if for all $\secpa\in \N, m\in \{0,1\}^{\ell_m(\secpa)}$,
        $$\BigPr{\Dec(1^\secpa, \dk, \rho_{\ek,m}) = m}{\Gen(1^\secpa)\to (\ek,\dk)\\\Enc(1^\secpa, \ek, m)\to \rho_{\ek,m}} = 1.$$
\end{definition}

\begin{remark}
    One could alternatively define correctness to only hold with probability $1-\negl(\secpa)$. However, it is possible to transform any symmetric encryption scheme with imperfect correctness into one with perfect correctness without compromising security. In particular, $\Enc$ can check if the produced ciphertext state decrypts to $m$, and if it does not $\Enc$ can simply output the message in the clear~\cite{TCC:HKNY24}.
\end{remark}

\begin{definition}[Symmetric key security]
    Let $(\Gen,\Enc,\Dec)$ be a symmetric encryption scheme that satisfies correctness. For any oracle algorithm $C^{(\cdot)}$, define $
    \IND^\secpa_{C}(\Gen,\Enc,\Dec)$ to be the following game.
    \begin{enumerate}
        \item Run $\Gen(1^\secpa)\to (\ek,\dk)$
        \item Run $C^{\Enc(1^\secpa,\ek,\cdot)}\to (m_0,m_1,\textbf{st})$
        \item Sample $b\gets \{0,1\}$
        \item Set $\rho_b = \Enc(1^\secpa,\ek,m_b)$
        \item Run $C^{\Enc(1^\secpa,\ek,\cdot)}(\rho_b,\textbf{st}) \to b'$
        \item Output $1$ if and only if $b=b'$
    \end{enumerate}

    We say that $(\Gen,\Enc,\Dec)$ satisfies reusable IND-CPA security if for any (non-uniform) QPT oracle algorithm $C^{(\cdot)}$,
    $$\Pr\left[\IND^\secpa_{C}(\Gen,\Enc,\Dec)\to 1\right] \leq \frac{1}{2}+\negl(\secpa).$$
\end{definition}

\begin{definition}[Pseudorandom encryption]
    Let $(\Gen,\Enc,\Dec)$ be a symmetric encryption scheme that satisfies correctness with ciphertexts of length $\ell_{\ct}$. For any oracle algorithm $C^{(\cdot)}$, define $\PR^\secpa_{C}(\Gen,\Enc,\Dec)$ to be the following game.
    \begin{enumerate}
        \item Run $\Gen(1^\secpa)\to (\ek,\dk)$
        \item Run $C^{\Enc(1^\secpa,\ek,\cdot)}\to (m,\textbf{st})$
        \item Sample $b\gets \{0,1\}$
        \item If $b = 0$, set $\rho = \Enc(1^\secpa,\ek,m)$
        \item If $b=1$, set $\rho = I/2^{\ell_\ct}$
        \item Run $C^{\Enc(1^\secpa,\ek,\cdot)}(\rho,\textbf{st}) \to b'$
        \item Output $1$ if and only if $b=b'$
    \end{enumerate}

    We say that $(\Gen,\Enc,\Dec)$ is pseudorandom if for any (non-uniform) QPT oracle algorithm $C^{(\cdot)}$,
    $$\Pr\left[\PR^\secpa_{C}(\Gen,\Enc,\Dec)\to 1\right] \leq \frac{1}{2}+\negl(\secpa).$$
\end{definition}

\begin{definition}[Uncloneable security]
    Let $t(\secpa),t'(\secpa):\N \to \N$ and let $(\Gen,\Enc,\Dec)$ be a symmetric encryption scheme that satisfies correctness. For any algorithms $A_1,\dots,A_{t'}$ and oracle algorithm $C^{(\cdot)}$, define $\CLONE^{t\to t',\secpa}_{A_1,\dots,A_{t'},C}(\Gen,\Enc,\Dec)$ to be the following game
    \begin{enumerate}
        \item Run $\Gen(1^\secpa)\to (\ek,\dk)$
        \item Run $C^{\Enc(1^\secpa, \ek,\cdot)} \to (m_0,m_1,\textbf{st})$
        \item Sample $b\gets \{0,1\}$
        \item Set $\rho_b = \Enc(1^\secpa,\ek, m_b)$
        \item Run $C^{\Enc(1^\secpa, \ek,\cdot)}(\rho_b^{\otimes t}, \textbf{st}) \to \sigma_{\Reg{A}_1\cdots \Reg{A}_{t'}}$
        \item For each $i \in [t']$, run $A_i(1^\secpa,\dk,\sigma_{\Reg{A}_i})\to b_i$
        \item Output $1$ if and only if $b=b_1=\dots=b_{t'}$
    \end{enumerate}
    We say that $(\Gen,\Enc,\Dec)$ satisfies reusable $t\to t'$ security if for any (non-uniform) QPT $A_1,\dots,A_{t'}$ and (non-uniform) QPT oracle algorithm $C^{(\cdot)}$,
    $$\Pr[\CLONE^{t\to t',\secpa}_{A_1,\dots,A_{t'},C}(\Gen,\Enc,\Dec) \to 1] \leq \frac{1}{2}+\negl(\secpa).$$

    We say that $(\Gen,\Enc,\Dec)$ satisfies \emph{one-time} $t\to t'$ security if for any (non-uniform) QPT $A_1,\dots,A_{t'}$ and (non-uniform) QPT algorithm $C$ making no oracle queries,\footnote{One could alternatively demand security against \emph{unbounded} adversaries, though we note that security against QPT adversaries suffices for our results.}
    $$\Pr[\CLONE^{t\to t',\secpa}_{A_1,\dots,A_{t'},C}(\Gen,\Enc,\Dec) \to 1] \leq \frac{1}{2} + \negl(\secpa).$$

    We will call a symmetric encryption scheme for one-bit messages satisfying one-time $t\to t'$ security a ``$t\to t'$ uncloneable bit".
\end{definition}

\begin{definition}[\cite{bartusek2026unclonableencryptionhaarrandom}]
    We will call a symmetric encryption scheme $(\Gen,\Enc,\Dec)$ \emph{pure} if there exists a function $\ell_r(\secpa): \mathbb{N} \to \mathbb{N}$ such that the encryption algorithm $\Enc$ is of the following form.
    \begin{enumerate}
        \item Take as input a key $\ek$ and a message $m$.
        \item Sample $r\gets \{0,1\}^{\ell_r(\secpa)}$.
        \item Output $E_{\ek} \ket{m,r,0}$ where $E_{\ek}$ is some efficient unitary.
    \end{enumerate}
    We will call $E = \{E_{\ek}\}$ the purification of $\Enc$, and for $r\in \{0,1\}^{\ell_r}$, we will denote $\Enc(m;r) = E_\ek \ket{m,r,0}$.
\end{definition}

\begin{definition}
    We say that a symmetric encryption scheme $(\Gen,\Enc,\Dec)$ is of \emph{normal form} if 
    $$\Pr_{(\ek,\dk) \gets \Gen(1^\secpa)}[\ek = \dk] = 1.$$
    In this case, we will sometimes write $sk\coloneqq ek=dk$.
\end{definition}

\begin{definition}[Identical copy security]
    Let $t(\secpa),t'(\secpa):\N \to \N$ and let $(\Gen,\Enc,\Dec)$ be a pure symmetric encryption scheme that satisfies correctness. For any algorithms $A_1,\dots,A_{t'}$ and any oracle algorithm $C^{(\cdot)}$, define $\IDCLONE^{t\to t',\secpa}_{A_1,\dots,A_{t'},C}(\Gen,\Enc,\Dec)$ to be the following game
    \begin{enumerate}
        \item Run $\Gen(1^\secpa)\to (\ek,\dk)$
        \item Run $C^{\Enc(1^\secpa, \ek,\cdot)} \to (m_0,m_1,\textbf{st})$.
        \item Sample $b\gets \{0,1\}$, $r\gets \{0,1\}^{\ell_r(\secpa)}$
        \item Set $\ket{\phi_b} = \Enc(1^\secpa, \ek, m_b; r)$
        \item Run $C^{\Enc(1^\secpa, \ek,\cdot)}(\ketbra{\phi_b}^{\otimes t}, \textbf{st}) \to \sigma_{\Reg{A}_1\cdots \Reg{A}_{t'}}$
        \item For each $i \in [t']$, run $A_i(1^\secpa,\dk,\sigma_{\Reg{A}_i})\to b_i$
        \item Output $1$ if and only if $b=b_1=\dots=b_{t'}$
    \end{enumerate}
    We say that $(\Gen,\Enc,\Dec)$ satisfies reusable identical copy $t\to t'$ security if for any (non-uniform) QPT $A_1,\dots,A_{t'}$ and (non-uniform) QPT oracle algorithm $C^{(\cdot)}$,
    $$\Pr[\IDCLONE^{t\to t',\secpa}_{A_1,\dots,A_{t'},C}(\Gen,\Enc,\Dec) \to 1] \leq \frac{1}{2}+\negl(\secpa).$$
\end{definition}

\begin{remark}
    The key difference between standard security and identical copy security is that in the standard security model, the adversary gets many copies of the \textit{mixed state} encoding the challenge message $m_b$. However, in the identical copy setting, the adversary gets many copies of the \textit{pure state} corresponding to encryptions of $m_b$ under some fixed randomness $r$.
\end{remark}
\section{Plaintext expansion from reusable SKE}\label{sec:plaintext-expansion}

We first formally define decomposable randomized encodings for both classical and quantum circuits.

\begin{definition}[Decomposable classical randomized encoding]
    A decomposable classical randomized encoding is a tuple of algorithms $(\Enc,\Lab,\Dec)$ with the following syntax.
    \begin{itemize}
        \item $\Enc(1^\secpa,C) \to \widehat{C},r$ takes in a security parameter $\secpa$ and the description of a classical circuit $C$ with $\ell$ classical input bits, and outputs a (potentially quantum) encoded circuit $\widehat{C}$ as well as randomness $r$.
        \item $\Lab_i(1^\secpa,x_i,r) \to \widehat{x}_i$ takes in a security parameter $\secpa$, an index $i$, a bit $x_i$, and randomness $r$, and outputs a classical string $\widehat{x}_i$.
        \item $\Dec(1^\secpa, \widehat{C},\widehat{x}) \to y$ takes in a security parameter $\secpa$, an encoded circuit $\widehat{C}$, and classical input labels $\widehat{x}$, and produces an output string $y$.
    \end{itemize}
    It should satisfy the following.
    \begin{itemize}
        \item \textbf{Correctness}: Let $C$ be any classical circuit taking in $\ell$ classical bits. For all classical inputs $x = (x_1,\dots,x_{\ell})$, 
        $$\BigPr{\Dec(1^\secpa,\widehat{C},(\widehat{x}_1,\dots,\widehat{x}_{\ell}))\to C(x)}{\Enc(1^\secpa, C) \to \widehat{C}, r\\
        \Lab_i(1^\secpa, x_i, r)\to \widehat{x}_i} = 1.$$
        \item \textbf{Security}: There exists a QPT algorithm $\Sim$ such that for any classical circuit $C$, classical input $x=(x_1,\dots,x_{\ell})$, and (non-uniform) QPT algorithm $\A$, define
        $$p_1=\BigPr{\A(\widehat{C},(\widehat{x}_1,\dots,\widehat{x}_{\ell}))\to 1}{\Enc(1^\secpa, C) \to \widehat{C}, r\\
        \Lab_i(1^\secpa, x_i, r)\to \widehat{x}_i},$$
        $$p_2= \Pr\left[\A(\Sim(1^\secpa, |C|, \ell, C(x))) \to 1\right],$$
        then
        $$\abs{p_1-p_2}\leq \negl(\secpa).$$
    \end{itemize} 
\end{definition}

\begin{definition}[Decomposable quantum randomized encoding~\cite{STOC:BraYue22}]
    A decomposable quantum randomized encoding (DQRE) is a tuple of algorithms $(\Enc,\Lab^{q},\Lab^{c},\Dec)$ with the following syntax.
    \begin{itemize}
        \item $\Enc(1^\secpa,C, \ell_q,\ell_c) \to \widehat{C}, r, \sigma$ takes in a security parameter $\secpa$ and the description of a quantum circuit $C$ with $\ell_q$ input qubits and $\ell_c$ input bits, and outputs a (potentially quantum) encoded circuit $\widehat{C}$, classical randomness $r$, as well as a quantum resource state $\sigma_{\Reg{R}_1,\dots,\Reg{R}_{\ell_q}}$ (potentially entangled with $\widehat{C}$).
        \item $\Lab_i^q(1^\secpa,\rho_i,r,\sigma_{\Reg{R}_i}) \to \widehat{\rho}_i$ takes in a security parameter $\secpa$, an index $i$, a single qubit state $\rho_i$, randomness $r$, and the $i$th register of the resource state, and outputs a mixed state $\widehat{\rho}_i$ (possibly entangled with the resource state).
        \item $\Lab_i^c(1^\secpa,x_i,r) \to \widehat{x}_i$ takes in a security parameter $\secpa$, an index $i$, a bit $x_i$, randomness $r$, and outputs a classical string $\widehat{x}_i$.
        \item $\Dec(1^\secpa, \widehat{C},\widehat{\rho},\widehat{x}) \to \tau$ takes in a security parameter $\secpa$, an encoded circuit $\widehat{C}$, quantum input labels $\widehat{\rho}$, and classical input labels $\widehat{x}$, and produces an output state $\tau$.
    \end{itemize}
    It should satisfy the following.
    \begin{itemize}
        \item \textbf{Correctness}: Let $C$ be any quantum circuit taking in $\ell_q$ qubits and $\ell_c$ classical bits. Let $\Reg{I}_1,\dots,\Reg{I}_{\ell_q}$ be registers over $\mathcal{H}(\{0,1\})$ and let $\Reg{E}$ be an arbitrary register. For any quantum state $\rho_{\Reg{I}_1\cdots \Reg{I}_{\ell_q},\Reg{E}}$, and classical string $x = (x_1,\dots,x_{\ell_c})$, define $\tau$ to be the state produced by the following process.
        \begin{enumerate}
            \item Run $\Enc(1^\secpa, C, \ell_q,\ell_c) \to \widehat{C}, r,\sigma$
            \item For $i\in [\ell_q]$, run $\Lab_i^q(1^\secpa, \rho_{\Reg{I}_i}, r, \sigma_{\Reg{R}_i}) \to \widehat{\rho}_i$
            \item For $i \in [\ell_c]$, run $\Lab_i^c(1^\secpa, x_i, r)\to \widehat{x}_i$
            \item Output $\Dec(1^\secpa, \widehat{C},(\widehat{\rho}_1,\dots,\widehat{\rho}_{\ell_q}),(\widehat{x}_1,\dots,\widehat{x}_{\ell_c})) \to \tau$
        \end{enumerate}
        Then 
        $$(C(\rho_{\Reg{I}_1\cdots \Reg{I}_{\ell_q}},x), \rho_{\Reg{E}}) \equiv (\tau, \rho_{\Reg{E}})$$
        are equivalent as density matrices.
        \item \textbf{Security}: There exists a QPT algorithm $\Sim$ such that for any quantum state $\rho_{\Reg{I}_1\cdots \Reg{I}_{\ell_q},\Reg{E}}$, classical string  $x = (x_1,\dots,x_{\ell_c})$, quantum circuit $C$, and (non-uniform) QPT algorithm $\A$, define
        $$p_1=\BigPr{\A(\widehat{C}, (\widehat{\rho}_1,\dots,\widehat{\rho}_{\ell_q}),(\widehat{x}_1,\dots,\widehat{x}_{\ell_c}),\rho_{\Reg{E}})\to 1}{\Enc(1^\secpa, C, \ell_q, \ell_c) \to \widehat{C}, r, \sigma\\
        \Lab_i^q(1^\secpa, \rho_{\Reg{I}_i}, r, \sigma_{\Reg{R}_i}) \to \widehat{\rho}_i\\
        \Lab_i^c(1^\secpa, x_i, r)\to \widehat{x}_i},$$
        $$p_2= \Pr\left[\A(\Sim(1^\secpa, |C|, \ell_q, \ell_c, C(\rho_{\Reg{I}_1\cdots \Reg{I}_{\ell_q}},x)), \rho_{\Reg{E}}) \to 1\right],$$
        then
        $$\abs{p_1-p_2}\leq \negl(\secpa).$$
    \end{itemize} 

    Note that for classical inputs, the label is required to be classical and independent of the resource state $\sigma$.
\end{definition}

It is easy to see that security immediately implies the following useful indistinguishability property.

\begin{theorem}\label{thm:simpdqre}
    Let $C,D$ be two quantum circuits taking as input $\ell_q$ qubits and $\ell_c$ classical bits with $|C|=|D|$. Let $\rho_{\Reg{I}_{1},\dots,\Reg{I}_{\ell_q},\Reg{E}}^C,\rho_{\Reg{I}_{1},\dots,\Reg{I}_{\ell_q},\Reg{E}}^D$ be any two quantum states, and let $x^C,x^D \in \{0,1\}^{\ell_c}$ be any two classical strings such that $(C(\rho_{\Reg{I}_1\cdots \Reg{I}_{\ell_q}}^C,x^C),\rho_{\Reg{E}}^C) \equiv (D(\rho_{\Reg{I}_1\cdots \Reg{I}_{\ell_q}}^D,x^D),\rho_{\Reg{E}}^D)$ are equivalent as mixed states. Then for any QPT adversary $\A$, define
    $$p_1=\BigPr{\A(\widehat{C}, (\widehat{\rho}_{\Reg{I}_1},\dots,\widehat{\rho}_{\Reg{I}_{\ell_q}}),(\widehat{x}_1,\dots,\widehat{x}_{\ell_c}),\rho_{\Reg{E}}^C)\to 1}{\Enc(1^\secpa, C, \ell_q, \ell_c) \to \widehat{C}, r, \sigma\\
        \Lab_i^q(1^\secpa, \rho_{\Reg{I}_i}^C, r, \sigma_{\Reg{R}_i}) \to \widehat{\rho}_i\\
        \Lab_i^c(1^\secpa, x_i^C, r)\to \widehat{x}_i},$$
    $$p_2=\BigPr{\A(\widehat{D}, (\widehat{\rho}_{\Reg{I}_1},\dots,\widehat{\rho}_{\Reg{I}_{\ell_q}}),(\widehat{x}_1,\dots,\widehat{x}_{\ell_c}),\rho_{\Reg{E}}^D)\to 1}{\Enc(1^\secpa, D, \ell_q, \ell_c) \to \widehat{D}, r, \sigma\\
        \Lab_i^q(1^\secpa, \rho_{\Reg{I}_i}^D, r, \sigma_{\Reg{R}_i}) \to \widehat{\rho}_i\\
        \Lab_i^c(1^\secpa, x_i^D, r)\to \widehat{x}_i},$$
    then
    $$\abs{p_1-p_2}\leq \negl(\secpa).$$
\end{theorem}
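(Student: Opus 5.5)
The plan is a standard two-hop hybrid through the simulator given by the security of the DQRE. Fix $C,D$, the states $\rho^C,\rho^D$, and the strings $x^C,x^D$ as in the statement, and fix a QPT adversary $\A$. Let $p_{\Sim}^C$ be the probability that $\A$ outputs $1$ on input $(\Sim(1^\secpa,|C|,\ell_q,\ell_c,C(\rho^C_{\Reg{I}_1\cdots\Reg{I}_{\ell_q}},x^C)),\rho^C_{\Reg{E}})$. Define $p_{\Sim}^D$ analogously using $D$, $\rho^D$ and $x^D$. By the triangle inequality,
$$\abs{p_1-p_2}\leq \abs{p_1-p^C_{\Sim}}+\abs{p^C_{\Sim}-p^D_{\Sim}}+\abs{p^D_{\Sim}-p_2}.$$

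\textbf{Outer terms.} The first and third terms are each at most $\negl(\secpa)$. This follows by invoking the security property of the DQRE twice:
\begin{itemize}
\item once with circuit $C$, state $\rho^C_{\Reg{I}_1\cdots\Reg{I}_{\ell_q},\Reg{E}}$ and string $x^C$;
\item once with circuit $D$, state $\rho^D_{\Reg{I}_1\cdots\Reg{I}_{\ell_q},\Reg{E}}$ and string $x^D$.
\end{itemize}
In each case the adversary is the same $\A$, and $\Reg{E}$ plays the role of the side register handed to $\A$. These two games match $p_1$ and $p_2$ exactly.

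\textbf{Middle term.} I will argue that $p^C_{\Sim}=p^D_{\Sim}$ exactly. Since $|C|=|D|$, the classical arguments $(1^\secpa,|C|,\ell_q,\ell_c)$ fed to $\Sim$ agree in the two experiments. The joint state of the simulator's quantum input together with the side register is
$$\bigl(C(\rho^C_{\Reg{I}_1\cdots\Reg{I}_{\ell_q}},x^C),\rho^C_{\Reg{E}}\bigr)\quad\text{versus}\quad\bigl(D(\rho^D_{\Reg{I}_1\cdots\Reg{I}_{\ell_q}},x^D),\rho^D_{\Reg{E}}\bigr).$$
By hypothesis these are equal as density matrices. The overall experiment applies the fixed channel $\A\circ(\Sim\otimes\mathrm{id}_{\Reg{E}})$ to this joint state, so the output distributions coincide.

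\textbf{Main obstacle.} The only point needing care is the middle step, where the output of $C$ may be entangled with $\Reg{E}$. This is why the hypothesis must be stated as equality of the joint state and not just the marginals, and why $\Sim$ must act only on the output register. Because the DQRE security definition already quantifies over arbitrary side registers $\Reg{E}$ and feeds $\rho_{\Reg{E}}$ to $\A$ alongside the simulator's output, this is handled. Summing the three bounds gives $\abs{p_1-p_2}\leq\negl(\secpa)$.
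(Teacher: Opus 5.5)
Your proposal is correct and takes the approach the paper intends. The paper gives no written proof; it only remarks that this property follows immediately from DQRE security. Your hybrid $p_1 \approx p^C_{\Sim} = p^D_{\Sim} \approx p_2$ is that argument made explicit. In it, the middle equality uses the hypothesis that the joint output-plus-$\Reg{E}$ states coincide, together with the fact that $|C| = |D|$ makes the simulator's classical inputs identical.
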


\begin{remark}
    Unlike~\cite{STOC:BraYue22}, we explicitly require that producing classical labels does not depend on the resource state. This is a property achieved in their original protocol of~\cite{STOC:BraYue22}. In particular, producing classical labels always begins with a measurement of the corresponding register of the resource state in the standard basis, which may as well be done in the $\Enc$ step with the result stored in the randomness $r$. Although not explicitly stated, this requirement is also necessary for the construction from~\cite{TCC:HKNY24}.
\end{remark}

We then make explicit the fact that decomposable quantum randomized encodings can be constructed from reusable IND-CPA symmetric key encryption.

\begin{theorem}[\cite{STOC:BraYue22}]\label{thm:dqre}
    If there exists a decomposable classical randomized encoding, then there exists a decomposable quantum randomized encoding.
\end{theorem}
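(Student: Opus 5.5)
This theorem is attributed to \cite{STOC:BraYue22}, so the plan is to recall their construction and check it against our definition of DQRE. That definition is slightly stronger than theirs, since classical labels must not depend on the resource state.

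The construction goes through a Clifford-plus-$T$ decomposition of $C$ and quantum one-time pads. First, write $C$ (with its classical inputs hardwired as controlled gates) as a sequence of Clifford layers interleaved with $T$ gates. To label a quantum input qubit $\rho_i$, apply a random Pauli one-time pad $X^{a_i}Z^{b_i}$. For a classical input bit $x_i$, first convert it into the computational basis state $\ket{x_i}$, so that the circuit becomes purely quantum.

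The encoded circuit $\widehat{C}$ then consists of two parts. The first is a ``Clifford-plus-gadget'' evaluation: each $T$ gate is implemented by gate teleportation using magic states and EPR halves supplied in the resource state $\sigma$. The second is a classical decomposable randomized encoding of the classical \emph{Pauli-key update} circuit. This circuit takes the one-time-pad keys of the inputs, together with the measurement outcomes from the teleportation gadgets, and computes the corrections needed at each layer and finally the output's pad key.

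Decomposability on the quantum side is handled as follows. The evaluator holds the padded qubits, and the labels of the measurement outcomes are produced by the classical DRE's label function. This is possible because the Bartusek--Yuen style construction uses \emph{computationally} garbled label tables that are selected by the evaluator's measurement outcomes, embedded in the resource state as superpositions of labels. Thus $\Lab^q_i$ applies the pad with keys drawn from $r$ and couples the result to $\sigma_{\Reg{R}_i}$.

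For the classical labels, $\Lab^c_i(x_i,r)$ outputs the classical DRE label of $x_i$, and this depends only on $r$. We ensure this by measuring the relevant resource-state registers during $\Enc$ and storing the results in $r$, as in the preceding remark.

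Correctness follows gate by gate from the correctness of the Clifford conjugation rules, of $T$-gate teleportation, and of the classical DRE. For security, we hybridize. First, replace the classical DRE by its simulator, which needs only the final output of the key-update circuit, namely the output pad key. Then every Pauli pad on intermediate wires is uniformly random and independent of the rest of the view, so the entire transcript can be simulated from $|C|$ together with a one-time-padded copy of $C(\rho,x)$, followed by a fresh encoding of the identity-with-decode.

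The main obstacle is this middle hybrid. We must argue that the view (intermediate padded states, gadget states, and simulated classical garbling) is independent of $C$ and of the inputs even though the measurement outcomes are adaptive. I would follow \cite{STOC:BraYue22}'s argument: apply the simulator layer by layer from the output backwards, using the invariance of the uniform Pauli twirl under Clifford conjugation.
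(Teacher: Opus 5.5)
Your framing is the same as the paper's: invoke \cite{STOC:BraYue22}, then check the one extra requirement of this paper's definition, namely that $\Lab^c_i$ must not depend on the resource state. The paper gives no argument beyond that citation and the preceding remark. The problem is the construction you reconstruct in between. It is not the one in \cite{STOC:BraYue22} (Brakerski--Yuen), and it breaks at a concrete step, the $T$ gate.

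On a Pauli-padded wire, $T X^aZ^b\ket{\psi}$ equals $S^aX^aZ^bT\ket{\psi}$ up to phase. With magic-state injection, the required correction is $S^{c}$ with $c=a\oplus m$ (up to Paulis), where $m$ is the injection outcome. This correction is not a Pauli. Once a Hadamard intervenes it cannot be pushed past a later $T$, because $T(HSH)T^\dagger$ is not Clifford. So it must be applied mid-computation.

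A single classical DRE of a ``Pauli-key update'' circuit cannot provide this. A DRE reveals nothing until all its input labels are supplied, but the later inputs (subsequent measurement outcomes) are only meaningful after the correction has been applied. If instead $c$ is released to the evaluator, who already knows $m$, then the hidden key $a$ is exposed. So the claim in your key hybrid, that every intermediate pad is uniformly random and independent of the view, is false.

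Two further steps are also unsupported:
\begin{enumerate}
\item The evaluator applies $C$'s Clifford layers in the clear. This reveals $C$, although $\Sim$ receives only $|C|$, and the paper's use of the theorem (switching $C[m_b]$ to $D[m_0,m_1]$) needs that hiding.
\item You never specify how the evaluator's measurement outcomes become DRE labels; ``superpositions of labels in the resource state'' is not a mechanism. Any map $\ket{z}\ket{0}\mapsto\ket{z}\ket{\lab_z}$ handed to the evaluator in the clear reveals $\lab_0\oplus\lab_1$, and with it both labels.
\end{enumerate}

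The missing idea in \cite{STOC:BraYue22} is Clifford masking rather than Pauli masking. Each layer of a Clifford-plus-measurement form of $C$ is hidden under a fresh uniformly random Clifford $E_j$ acting on all wires. There is one classical garbled circuit per layer. It takes the labels of that layer's measurement outcomes and outputs the next Clifford to apply, of the form $(E_{j+1}\otimes L_{j+1})\,C_j\,E_j^{\dagger}$.

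This operator does three jobs at once:
\begin{enumerate}
\item It is uniformly random, so it leaks nothing about $C$ or the data.
\item It absorbs all non-Pauli corrections.
\item It absorbs the label embedding $L_{j+1}$. That map is an invertible affine map on bit strings, hence Clifford, so measuring afterwards yields exactly the next garbled circuit's input labels.
\end{enumerate}

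Input qubits enter through the resource registers, and a classical input is processed the same way as a quantum one. That is why, in their scheme, producing a classical label begins with a standard-basis measurement of $\sigma_{\Reg{R}_i}$, which the paper's remark moves into $\Enc$. In your scheme the classical labels never touch the resource state. You also treat classical inputs inconsistently, both as padded basis states and as DRE inputs. So borrowing that measurement step checks the extra requirement on a construction that does not itself meet the definition.
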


\begin{theorem}[Section 7.3 of~\cite{C:AnaQiaYue22}]\label{thm:dcre}
    If there exists a reusable IND-CPA symmetric encryption scheme, then there exists a decomposable classical randomized encoding.\footnote{In fact, the authors only need a one-time symmetric encryption scheme for messages twice as long as the key length. This follows immediately from reusable IND-CPA security.}
\end{theorem}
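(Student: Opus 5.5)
The statement is attributed to Section 7.3 of \cite{C:AnaQiaYue22}. I would reconstruct it by instantiating Yao's garbled circuits from the encryption scheme, since Yao's construction is exactly a decomposable classical randomized encoding.

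The first step is to extract the needed ``double-length'' one-time encryption from a reusable IND-CPA scheme $(\Gen',\Enc',\Dec')$. I would take a key to be an $\Gen'$ key and encrypt a message of any polynomial length bit-by-bit. Reusable IND-CPA security implies, by a standard hybrid over the bits, that the resulting scheme is one-time secure for long messages. It is also secure when a single key encrypts several messages, which Yao's gate garbling needs: each key encrypts up to four rows. Correctness can be made perfect via the remark above. For Yao I also need that decrypting a row under the wrong key is detectable. I would get this by appending a $\secpa$-bit string of zeros to every plaintext; a wrong key then yields that string only with negligible probability, or never if we check against the keys' own encryptions. If the ciphertexts are quantum, the garbled circuit $\widehat{C}$ is a quantum state, which the definition allows.

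The second step is the construction. $\Enc(1^\secpa,C)$ samples two keys $k_w^0,k_w^1$ for every wire $w$ and sets $r$ to be the input-wire keys. For each gate $g$ with inputs $u,v$ and output $w$, it publishes a randomly permuted table of the encryptions $\Enc'(k_u^{a},\Enc'(k_v^{b},k_w^{g(a,b)}))$. For output wires it publishes the translation of keys to bits. $\Lab_i(x_i,r)=k_{i}^{x_i}$ depends only on $x_i$ and $r$, which gives decomposability. $\Dec$ evaluates gate by gate, keeping the unique row that decrypts validly, and then translates the output keys. Correctness follows from perfect correctness of the encryption together with the validity check.

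The third step, security, is the main obstacle. I would follow the Lindell–Pinkas hybrid. $\Sim(1^\secpa,|C|,\ell,y)$ produces a garbled circuit in which every row of every table encrypts the same ``active'' key, and the output translation maps the active keys to $y$. I would then go through hybrids in topological order of the gates. In each hybrid the rows encrypted under an inactive key, whose key is never revealed because only $k^{x}$ labels and their derived keys are given, are replaced by encryptions of the active output key. Each step reduces to one-time, multi-message IND-CPA security of the nested encryption, where the reduction samples all the other keys itself. The delicate point is ordering the hybrids so that an inactive key is never needed to produce other parts of the distribution while it is being switched. I would argue this by induction over gates: in the real garbling an inactive wire key only ever appears inside rows encrypted under inactive keys of earlier wires, so processing gates from the input side keeps the key being switched unused elsewhere. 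The number of hybrids is polynomial in $|C|$, so the total distinguishing advantage is negligible. The final hybrid depends only on $|C|$, $\ell$, and $C(x)$.
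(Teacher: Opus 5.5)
Instantiating Yao's garbling with the given scheme is the right kind of argument; the paper itself only cites Section 7.3 of \cite{C:AnaQiaYue22} and notes that one-time encryption of messages twice the key length suffices. However, your gate garbling breaks in the case that matters. In this paper $\Enc'$ encrypts \emph{classical} strings and may output a \emph{quantum} ciphertext. So in $\Enc'(k_u^{a},\Enc'(k_v^{b},k_w^{g(a,b)}))$ the inner ciphertext is not a valid plaintext for the outer encryption, and measuring it to make it classical would in general destroy decryptability. You note that $\widehat{C}$ may be quantum but never address this, and quantum ciphertexts are exactly what arise in the microcrypt instantiations (SKE from PRUs or PRFS) the paper has in mind.

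The missing idea is to avoid nesting. For each row, publish $\Enc'(k_u^{a},s_{a,b})$ and $\Enc'(k_v^{b},s_{a,b}\oplus k_w^{g(a,b)})$ for a fresh random $s_{a,b}$. Alternatively, quantum-one-time-pad the inner ciphertext and encrypt the classical pad key under $k_u^{a}$. Your hybrid argument then goes through: switch the share under the inactive key using reusable IND-CPA, after which the other share is uniform and the encrypted output key can be changed to the active one.

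Your row-identification step also fails. Appending $0^\secpa$ does not make wrong-key decryption detectable under IND-CPA. Take any scheme, add a random tag $\tau$ to its key, attach $\tau$ in the clear to every ciphertext, and let decryption output the all-zeros string whenever the tag does not match. The result is still IND-CPA and perfectly correct, yet every wrong-key trial decryption passes your test. A random public tag per row would avoid this, but only up to negligible error, whereas the paper's definition requires correctness with probability $1$. Your fallback ``check against the keys' own encryptions'' is not specified enough to fix this.

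Use point-and-permute instead: attach a random select bit to each wire label and index rows by select bits. Then the evaluator decrypts exactly one row per gate with perfect correctness, which also avoids trial-decrypting quantum ciphertexts. Finally, $\Sim$ receives only $|C|$ and not the wiring, so you should garble a universal circuit and include the labels for $C$'s description in $\widehat{C}$. Otherwise your simulator cannot lay out the tables.
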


\begin{corollary}\label{cor:dqre}
    If there exists a reusable IND-CPA symmetric encryption scheme, then there exists a decomposable quantum randomized encoding.\footnote{Although this corollary is folklore, as far as the authors are aware it has not been explicitly stated in the literature.}
\end{corollary}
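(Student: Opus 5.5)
The plan is to chain the two preceding results. Starting from a reusable IND-CPA symmetric encryption scheme, \Cref{thm:dcre} yields a decomposable classical randomized encoding. Feeding this into \Cref{thm:dqre} produces a decomposable quantum randomized encoding. The corollary is simply the composition of these two implications.

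The only genuine work is to confirm that the objects match the definitions as stated in this paper. First, the notion of reusable IND-CPA security used in \Cref{thm:dcre} must agree with our definition (the game $\IND$). As noted in the footnote, \cite{C:AnaQiaYue22} only need a one-time symmetric encryption scheme whose messages are twice as long as the key. I would derive this from reusable IND-CPA security by splitting a $2n$-bit message into two halves and encrypting each under the same key. A standard hybrid over the two halves, using the encryption oracle to produce the other half, gives one-time security for the concatenation.

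Second, the classical randomized encoding output by \Cref{thm:dcre} must have the syntax and perfect correctness required by our definition. Perfect correctness can be enforced via the transformation in the preceding remark if necessary.

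The main obstacle is the mismatch between the quantum randomized encoding of \cite{STOC:BraYue22} and our \emph{strengthened} definition of DQRE. Our definition requires classical labels $\Lab_i^c$ to be classical strings that depend only on $x_i$ and $r$, and not on the resource state $\sigma$. I would handle this as in the remark above. In \cite{STOC:BraYue22}, producing a classical label begins with a standard-basis measurement of the corresponding register $\Reg{R}_i$ of the resource state. This measurement commutes with everything else that touches $\Reg{R}_i$, so it can be moved into $\Enc$ and its outcome stored in $r$. Because the resulting joint distribution is identical, both correctness and simulation security are preserved, and the modified scheme meets our definition.
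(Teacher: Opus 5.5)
Your proposal is correct and matches the paper: the corollary is exactly the composition of \Cref{thm:dcre} and \Cref{thm:dqre}, and the two definitional checks you raise are the ones the paper handles in its footnote (a one-time SKE for double-length messages follows from reusable IND-CPA) and in its remark (the standard-basis measurement used for classical labels is moved into $\Enc$ and its outcome stored in $r$). One small caution: the correctness-boosting trick from the earlier SKE remark does not carry over directly to randomized encodings, because a decomposable encoder never sees the input $x$ and so cannot check decoding; the aside is unnecessary anyway, since the paper invokes \Cref{thm:dcre} as already giving its perfectly correct notion.
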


Now, we prove that plaintext expansion and reusable security for uncloneable encryption follows from the existence of reusable IND-CPA symmetric key encryption.

\begin{theorem}\label{thm:expansion}
    Let $t(\secpa),t'(\secpa),\ell_m(\secpa)$ be polynomials. If there exists a symmetric encryption scheme with one-time $t\to t'$ security for one-bit messages and there exists a symmetric encryption scheme with reusable IND-CPA security, then there exists a symmetric encryption scheme with reusable $t\to t'$ security for $\ell_m(\secpa)$ bit messages.
\end{theorem}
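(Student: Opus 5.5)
We instantiate the compiler sketched in the introduction. Let $(\Gen^u,\Enc^u,\Dec^u)$ be the $t\to t'$ uncloneable bit, with decryption key $\udk\in\{0,1\}^\ell$. Let $(\Gen^s,\Enc^s,\Dec^s)$ be the reusable IND-CPA scheme, and let $(\Enc,\Lab^q,\Lab^c,\Dec)$ be the DQRE given by \Cref{cor:dqre}.

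\textbf{The construction.} Key generation outputs $\ek=(\ek^u,\udk,\{\ek_{i,b}\}_{i\in[\ell],b\in\{0,1\}})$ and $\dk=(\udk,\{\dk_{i,\udk_i}\}_{i\in[\ell]})$, where $(\ek^u,\udk)\gets\Gen^u$ and $(\ek_{i,b},\dk_{i,b})\gets\Gen^s$. To encrypt $m$:
\begin{enumerate}
\item Fix a padded circuit $C[m]$ of a fixed size $s$. It takes a quantum input (a ciphertext register of the uncloneable bit) and classical input $\udk$, and outputs $m$.
\item Compute the DQRE $(\widehat{C[m]},\widehat{\rho},\widehat{\udk}_1,\dots,\widehat{\udk}_\ell)$ on quantum input $\rho=\Enc^u(\ek^u,0)$. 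Using a fresh uncloneable-bit encryption rather than $\ket{0}$ is harmless, since $C[m]$ ignores it.
\item Set $\ct_{i,\udk_i}=\Enc^s(\ek_{i,\udk_i},\widehat{\udk}_i)$ and $\ct_{i,1-\udk_i}=\Enc^s(\ek_{i,1-\udk_i},0)$.
\item Output $(\widehat{C[m]},\widehat{\rho},\{\ct_{i,b}\})$.
\end{enumerate}
Correctness follows from correctness of the SKE and of the DQRE: decrypt the $\ct_{i,\udk_i}$ to recover the labels and evaluate. Because the classical labels depend only on $r$ and not on the resource state, $\widehat{\udk}_i$ can be computed classically and then encrypted.

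\textbf{Hybrids.} Fix an adversary $(C^{(\cdot)},A_1,\dots,A_{t'})$ and proceed through the following hybrids.

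$\Hyb_1$. Replace \emph{every} ciphertext in the game, including oracle answers and all $t$ challenge copies, by the form $\ct_{i,b}=\Enc^s(\ek_{i,b},\widehat{b}^{(i)})$, where $\widehat{b}^{(i)}=\Lab_i^c(b,r)$. For $b=\udk_i$ nothing changes. For $b=1-\udk_i$ the adversary never sees $\dk_{i,1-\udk_i}$, so a hybrid over the $\ell$ keys $\ek_{i,1-\udk_i}$ using reusable IND-CPA applies. The reduction holds $\udk$ and all other keys, so it can simulate the remaining ciphertexts and the decryptors $A_i$. Only the oracle is needed for key $(i,1-\udk_i)$, and polynomially many challenge-style queries are handled by a standard multi-challenge hybrid. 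Every ciphertext is now a function of the full label table $\{\Lab^c_i(b,r)\}_{i,b}$, the garbled circuit, and the quantum label.

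$\Hyb_2$. For the $t$ challenge ciphertexts, each sampled independently, replace $C[m_\beta]$ with input $(\Enc^u(0),\udk)$ by the circuit $D[m_0,m_1]$, padded to size $s$, with input $(\Enc^u(\ek^u,\beta),\udk)$. Here $D[m_0,m_1]$ decrypts its quantum input with $\udk$ and outputs $m_{\beta'}$. Both evaluations output $m_\beta$ exactly, by perfect correctness of the uncloneable bit. This step is the main obstacle: \Cref{thm:simpdqre} only covers the labels $\widehat{\udk}_i$ of the actual input, while $\Hyb_1$ exposes, through encryptions, labels for \emph{both} values of each bit. I would handle this by applying \Cref{thm:simpdqre} only after rewinding, per challenge copy, the non-$\udk$ half back to $\Enc^s(\ek_{i,1-\udk_i},0)$ via IND-CPA. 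So the order is: apply $\Hyb_1$ to all oracle queries, but keep the challenge ciphertexts honest-form while invoking \Cref{thm:simpdqre}. Each copy is switched in turn. The reduction for copy $j$ receives $(\widehat{C},\widehat{\rho},\widehat{\udk})$ and samples everything else itself, using $\udk$, the SKE keys, and $\ek^u$, with the purifying register $\Reg{E}$ empty. It can do so because encryption queries in $\Hyb_1$ are generated independently with fresh DQRE randomness and need no input from the challenge.

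\textbf{Reduction to the uncloneable bit.} In the final hybrid, the $t$ challenge ciphertexts depend on $\beta$ only through $t$ independent samples of $\Enc^u(\ek^u,\beta)$, that is, through $\rho_\beta^{\otimes t}$. The reduction receives $\rho_\beta^{\otimes t}$ and must still produce each garbling with the quantum label applied to its received copy. The obstacle is that the other labels $\widehat{\udk}_i$ depend on $\udk$. The fix is to encrypt both labels $\Lab_i^c(0,r),\Lab_i^c(1,r)$ under $\ek_{i,0},\ek_{i,1}$ respectively, which is indistinguishable by the earlier IND-CPA argument. Then the reduction needs no knowledge of $\udk$ when preparing ciphertexts. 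It samples all SKE keys itself, simulates all encryption queries as in $\Hyb_1$ without $\udk$, and at the cloning stage splits $\sigma$. Each $A_i'$ then receives $\udk$ from the uncloneable-bit challenger and forwards $\dk=(\udk,\{\dk_{i,\udk_i}\})$ to $A_i$. Success probability $\frac12+$ non-negligible would thus break one-time $t\to t'$ security, with the total hybrid loss negligible because the number of hybrids is polynomial.
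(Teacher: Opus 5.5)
Your compiler and hybrid structure match the paper's proof. That includes the ordering point you identified: the challenge ciphertexts' unused slots must still encrypt $0$ when \Cref{thm:simpdqre} is invoked, and both labels are exposed only afterwards. In the paper this is the intermediate game $\Hyb$.

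\textbf{The gap.} Your one deviation from the paper, garbling the quantum input $\Enc^u(\ek^u,0)$ instead of $\ket{0}$, is not harmless. It breaks your final reduction. In $\Hyb_1$, every answer to an encryption query contains a quantum label of a fresh sample of $\Enc^u(\ek^u,0)$. The adversary against the uncloneable bit must therefore produce polynomially many such samples. But it is playing the \emph{one-time} game: $C$ makes no oracle queries and never learns $\ek^u$. It receives only $\rho_\beta^{\otimes t}$, and the decryptors receive $\udk$ only after the cloning stage. So your claim that the reduction ``simulates all encryption queries as in $\Hyb_1$ without $\udk$'' fails. Encrypting both labels removes the dependence on $\udk$, but not on $\ek^u$. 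Simulating the encryption oracle without any secret of the uncloneable bit is exactly where reusability comes from, so this step is essential.

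\textbf{The repair.} It uses only tools you already have. One option is to take the quantum input to be $\ket{0}$, as the paper does. Then $\ek^u$ drops out of the encryption key, and the oracle can be simulated from the SKE keys alone. The other option is to add a hybrid that replaces the quantum input of each oracle answer by $\ket{0}$, via \Cref{thm:simpdqre} with $C=D=C[m]$ (both evaluations output $m$). You must do this before moving that answer to the both-labels form, while its unused slots still encrypt $0$, for the same reason you gave for the challenge ciphertexts. The reduction for this hybrid knows $\ek^u$ and all keys, so it can simulate everything else. With either change, the rest of your argument goes through.
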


\begin{proof}
We first present the construction, and then analyze its security.
\begin{construction}
    Let $(\Gen^{1},\Enc^1,\Dec^1)$ be a symmetric encryption scheme with one-time $t\to t'$ security for one-bit messages, with ciphertexts of length $\ell_{\ct}^1$ and decryption keys of length $\ell_{\dk}^1$. Let \[(\Enc^{\dqre},\Lab^{\dqre,q},\Lab^{\dqre,c},\Dec^{\dqre})\] be a decomposable quantum randomized encoding, as is guaranteed by~\Cref{cor:dqre}. Let $(\Gen^{\re},\Enc^{\re},\Dec^{\re})$ be a reusable IND-CPA secure symmetric encryption scheme. We define $(\Gen^{\ell_m},\Enc^{\ell_m},\Dec^{\ell_m})$ as follows.
    \begin{enumerate}
        \item $\Gen^{\ell_m}(1^\secpa)$: 
        \begin{enumerate}
            \item Sample $\Gen^{1}(1^\secpa) \to (\ek^{1},\dk^1)$. Define $\dk^1_{i}$ to be the $i$th bit of $\dk^1$, and define $\ell \coloneqq \ell_\dk^1$ to be the bit-length of $\dk^1$.
            \item For $i\in [\ell]$, $b\in \{0,1\}$, run $\Gen^{\re}(1^\secpa) \to (\ek_{i,b}, \dk_{i,b})$.
            \item Output $\ek = \left(\{\ek^{\re}_{i,b}\}_{i\in [\ell],b\in \{0,1\}},\dk^1\right), \dk = \left(\dk^{\re}_{1,\dk^1_1},\dots,\dk^{\re}_{\ell,\dk^1_{\ell}},\dk^1\right)$.
        \end{enumerate}
        \item $\Enc^{\ell_m}(1^\secpa,\ek,m)$: 
        \begin{enumerate}
            \item Parse $\ek = (\{\ek^{\re}_{i,b}\}_{i\in [\ell],b\in \{0,1\}},\dk^1)$.
            \item Define $C[m]$ to be the circuit that always outputs $m$, appropriately padded. It will take in $\ell_{\ct}^1$ qubits and $\ell = \ell_{\dk}^1$ classical bits. 
            \item Run $\Enc^{\dqre}(1^\secpa, C[m], \ell_{\ct}^1, \ell_{\dk}^1) \to \widehat{C},r,\sigma$.
            \item For $i \in [\ell_{\ct}^1]$, run $\Lab_i^{\dqre,q}(1^\secpa, 0, r, \sigma_{\Reg{R}_i}) \to \widehat{\rho}_i$.
            \item For $i \in [\ell_{\dk}^1]$, run $\Lab_i^{\dqre,c}(1^\secpa, \dk^1_i, r) \to \lab_{i,\dk^1_i}$. Set $\lab_{i, 1-\dk^1_i} = 0\dots 0$.
            \item For $i\in [\ell_{\dk}^1]$, $b\in \{0,1\}$ set $\ct_{i,b} = \Enc^{\re}(1^\secpa, \ek_{i,b}^{\re}, \lab_{i,b})$.
            \item Output $\ct = \left(\widehat{C},\widehat{\rho}_1,\dots,\widehat{\rho}_{\ell_{\ct}^1},\{\ct_{i,b}\}_{i\in [\ell_{\dk}^1],b\in \{0,1\}}\right)$.
        \end{enumerate}
        \item $\Dec^{\ell_m}(\dk,\ct)$:
        \begin{enumerate}
            \item Parse $\dk = \left(\dk^{\re}_{1,\dk^1_1},\dots,dk^{re}_{\ell,dk^1_{\ell}},\dk^1\right)$.
            \item Parse $\ct = \left(\widehat{C},\widehat{\rho}_1,\dots,\widehat{\rho}_{\ell_{\ct}^1},\{ct_{i,b}\}_{i\in [\ell_{\dk}^1],b\in \{0,1\}}\right)$.
            \item For each $i\in [\ell_{\dk}^1]$, run $\Dec^\re(\dk_i^{\re}, \ct_{i,\dk^1_i}) \to \lab_{i,\dk^1_i}$.
            \item Run $\Dec^{\dqre}\left(\widehat{C},(\widehat{\rho}_1,\dots,\widehat{\rho}_{\ell_{\ct}^1}),(\lab_{1,\dk^1_1},\dots,\lab_{\ell,\dk^1_\ell})\right) \to m$.
            \item Output $m$.
        \end{enumerate}
    \end{enumerate}
\end{construction}

Correctness is clear by construction. It remains to show $t\to t'$ security.

Let $A_1^{\ell_m},\dots,A_{t'}^{\ell_m},C^{\ell_m}$ be an attacker against the $t\to t'$ security of $(\Gen^{\ell_m},\Enc^{\ell_m},\Dec^{\ell_m})$. We will construct an attacker $A_1^{1},\dots,A_{t'}^1,C^1$ against the $t\to t'$ security of $(\Gen^1,\Enc^1,\Dec^1)$ simulating $A_1^{\ell_m},\dots,A_{t'}^{\ell_m},C^{\ell_m}$ as follows.
\begin{enumerate}
    \item For each $i\in [\ell_{\dk}^1]$, $b\in \{0,1\}$, $C^1$ will run $\Gen^{\re}(1^\secpa) \to (\ek_{i,b}^{\re},\dk_{i,b}^{\re})$.
    \item $C^1$ will begin by simulating $C^{\ell_m,(\cdot)} \to (m_0,m_1,\textbf{st})$ as specified next.
    \item Whenever $C^{\ell_m}$ makes a query to its oracle on input $m$, $C^1$ will do the following:
    \begin{enumerate}
        \item Run $\Enc^{\dqre}(1^\secpa,C[m],\ell_{\ct}^1,\ell_{\dk}^1)\to \widehat{C},r,\sigma$.
        \item For $i \in [\ell_{\ct}^1]$, run $\Lab_i^{\dqre,q}(1^\secpa, 0, r, \sigma_{\Reg{R}_i}) \to \widehat{\rho}_i$.
        \item For $i \in [\ell_{\dk}^1]$, $b\in \{0,1\}$, run $\Lab_i^{\dqre,c}(1^\secpa, b, r) \to \lab_{i,b}$.
        \item For $i\in [\ell_{\dk}^1]$, $b\in \{0,1\}$ set $\ct_{i,b} = \Enc^{\re}(1^\secpa, \ek_{i,b}^{\re}, \lab_{i,b})$.
        \item Output $\widehat{C},\widehat{\rho}_1,\dots,\widehat{\rho}_{\ell_{\ct}^1},\{\ct_{i,b}\}_{i\in [\ell_{\dk}^1],b\in \{0,1\}}$.
    \end{enumerate}
    \item $C^1$ will then query its own challenger with $m_0=0,m_1=1$, receiving response $\rho_1,\dots,\rho_t$. Let $(\rho_j)_{\Reg{I}_i}$ denote the $i$th qubit of $\rho_j$.
    \item For each $j \in [t]$, $C^1$ will then do the following:
    \begin{enumerate}
        \item Define a circuit $D[m_0,m_1]$ that behaves as follows:
        \begin{enumerate}
            \item On input $\rho,\dk^{1}$, run $\Dec(\dk^1,\rho)\to b$
            \item Output $m_b$
        \end{enumerate}
        \item Run $\Enc^{\dqre}(1^\secpa, D[m_0,m_1], \ell_{\ct}^1, \ell_{\dk}^1) \to \widehat{D},r,\sigma$.
        \item For $i \in [\ell_{ct}^1]$, run $\Lab_i^{\dqre,q}(1^\secpa, (\rho_j)_{\Reg{I}_i}, r, \sigma_{\Reg{R}_i}) \to \widehat{\rho}_i$
        \item For $i \in [\ell_{\dk}^1]$, $b\in \{0,1\}$, run $\Lab_i^{\dqre,c}(1^\secpa, b, r) \to \lab_{i,b}$.
        \item For $i\in [\ell_{\dk}^1]$, $b\in \{0,1\}$ set $\ct_{i,b} = \Enc^{\re}(1^\secpa, \ek_{i,b}^{\re}, \lab_{i,b})$.
        \item Set $CT_{j}\coloneqq \left(\widehat{D},\widehat{\rho}_1,\dots,\widehat{\rho}_{\ell_{\ct}^1},\{\ct_{i,b}\}_{i\in [\ell_{\dk}^1],b\in \{0,1\}}\right)$.
    \end{enumerate}
    \item $C^1$ will then send challenge response $CT_1,\dots,CT_t$ to $C^{\ell_m}(CT_1,\dots,CT_{t},\textbf{st}) \to \sigma_{\Reg{A}_1,\dots,\Reg{A}_{t'}}$, simulating queries in the same way as before.
    \item $C^1$ will then send to each $A_1^j$ the state $(\sigma_{\Reg{A}_j},\{\dk_{i,b}^{\re}\}_{i\in [\ell^1_{\dk}],b\in \{0,1\}})$.
    \item For each $j\in [t']$, on input $(\sigma_{\Reg{A}_j},\{\dk_{i,b}^{\re}\}_{i\in [\ell_{\dk}^1],b\in \{0,1\}},\dk^1)$, $A_j^1$ will run $A_j^{\ell_m}(\sigma_{\Reg{A}_j}, (\dk^{\re}_{1,\dk^1_{1}},\dots,\dk^{\re}_{\ell,\dk^1_\ell})) \to b_j$ and output $b_j$.
\end{enumerate}

It then remains to show that the advantage of $A^{\ell_m}_1,\dots,A^{\ell_m}_{t'},C^{\ell_m}$ in the uncloneable encryption game is tied to the advantage of $A_1^1,\dots,A_{t'}^1,C^1$. To do this, we consider an intermediate encryption scheme $\Hyb$ which will be the same as $\CLONE^{t\to t'}_{A_1^1,\dots,A_{t'}^1,C}(\Gen^1,\Enc^1,\Dec^1)$ except that we modify lines 3.d and 5.e. In particular, let $\dk^1$ be the challenger's decryption key produced by $\Gen^1 \to (\ek^1,\dk^1)$. For each $i\in [\ell_{\dk}^1]$, instead of setting $\ct_{i,1-\dk^1_i}=\Enc^{\re}(\ek_{i,1-\dk^1_i}^{\re},\lab_{i,1-\dk^1_i})$, we set $\ct_{i,1-\dk^1_i} = \Enc^{\re}(\ek_{i,1-\dk^1_i}^{\re},0\dots 0)$.

Note that aside from constructing these ciphertexts, the entire game is independent of $\dk^{\re}_{i,1-\dk^1_i}$. A standard hybrid argument along with IND-CPA security of $(\Gen^{\re},\Enc^{\re},\Dec^{\re})$ shows that 
$$\abs{\Pr[\Hyb\to 1] - \Pr[\CLONE^{t\to t'}_{A_1^1,\dots,A_{t'}^1,C^1}(\Gen^1,\Enc^1,\Dec^1)\to 1]} \leq \negl(\secpa).$$

But now note that the only difference between $\CLONE^{t\to t'}_{A_1^{\ell_m},\dots,A_{t'}^{\ell_m},C^{\ell_m}}(\Gen^{\ell_m},\Enc^{\ell_m},\Dec^{\ell_m})$ and $\Hyb$ is the choice of circuit and input encoded by the decomposable quantum randomized encoding in the challenge ciphertexts. In the former case, we have $\widehat{C[m_b]},\widehat{0},\widehat{\dk^1}$, and in the latter case we have $\widehat{D[m_0,m_1]},\widehat{\Enc(\ek^1,b)},\widehat{\dk^1}$. But note that with probability $1$, $D[m_0,m_1](\Enc(\ek^1,b),\dk^1) = m_b = C[m_b](0,\dk^1)$. And so by a standard hybrid argument along with security of the decomposable quantum randomized encoding (\Cref{thm:simpdqre}), we have
$$\abs{\Pr[\Hyb\to 1] - \Pr[\CLONE^{t\to t'}_{A_1^{\ell_m},\dots,A_{t'}^{\ell_m},C^{\ell_m}}(\Gen^{\ell_m},\Enc^{\ell_m},\Dec^{\ell_m}) \to 1]} \leq \negl(\secpa).$$
We critically make use of the fact that security holds against quantum side information on the input, which here models the other challenge ciphertexts.

And so since 
$$\Pr[\CLONE^{t\to t'}_{A_1^1,\dots,A_{t'}^1,C^1}(\Gen^1,\Enc^1,\Dec^1)\to 1] \leq \frac{1}{2}+\negl(\secpa),$$
we have
$$\Pr[\CLONE^{t\to t'}_{A_1^{\ell_m},\dots,A_{t'}^{\ell_m},C^{\ell_m}}(\Gen^{\ell_m},\Enc^{\ell_m},\Dec^{\ell_m}) \to 1] \leq \frac{1}{2} + \negl(\secpa).$$
\end{proof}

Finally, we show that our construction can be made \emph{normal form} assuming that the underlying IND-CPA symmetric encryption scheme is both normal form and pseudorandom.

\begin{corollary}\label{cor:pseudorandom}
    Let $t(\secpa),t'(\secpa),\ell_m(\secpa)$ be polynomials. If there exists a symmetric encryption scheme with one-time $t\to t'$ security for one-bit messages and there exists a normal form, pseudorandom symmetric encryption scheme with reusable IND-CPA security, then there exists a normal form symmetric encryption scheme with reusable $t\to t'$ security for $\ell_m(\secpa)$ bit messages.
\end{corollary}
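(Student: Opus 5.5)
We modify the construction in the proof of \Cref{thm:expansion} so that the encryption key coincides with the decryption key. Let $(\Gen^{\re},\Enc^{\re},\Dec^{\re})$ be normal form and pseudorandom, with keys $\sk_{i,b}$ and ciphertext length $\ell^{\re}_{\ct}$. Set
\[
\sk = \left(\sk^{\re}_{1,\dk^1_1},\dots,\sk^{\re}_{\ell,\dk^1_\ell},\dk^1\right),
\]
so $\Gen$ only needs to sample the $\ell$ keys indexed by the bits of $\dk^1$. Encryption proceeds exactly as before, except that for each $i$ it sets
\[
\ct_{i,\dk^1_i}=\Enc^{\re}(\sk^{\re}_{i,\dk^1_i},\lab_{i,\dk^1_i})
\quad\text{and}\quad
\ct_{i,1-\dk^1_i}\gets I/2^{\ell^{\re}_{\ct}},
\]
the latter being a maximally mixed (equivalently, uniformly random classical) string. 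Decryption is unchanged. Correctness is immediate, since decryption never touches the positions $1-\dk^1_i$. The scheme is normal form because $\Enc$ uses only $\sk$.

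For security, we reduce to the non-normal-form scheme of \Cref{thm:expansion}, instantiated with the same pseudorandom SKE. Call that scheme $\Pi$ and the new one $\Pi^{\nf}$. Given an attacker on $\Pi^{\nf}$, we build an attacker on $\Pi$ via three hybrids.
\begin{enumerate}
\item[$H_0$:] The real $\CLONE$ game for $\Pi^{\nf}$.
\item[$H_1$:] Every ciphertext, including oracle answers and all $t$ challenge copies, sets $\ct_{i,1-\dk^1_i}=\Enc^{\re}(\sk^{\re}_{i,1-\dk^1_i},0\dots0)$ under a freshly sampled, never-revealed key $\sk^{\re}_{i,1-\dk^1_i}$. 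This is exactly the $\Pi$ ciphertext distribution.
\item[$H_2$:] The $\CLONE$ game for $\Pi$, in which the adversaries $A_j$ receive $\dk=(\sk^{\re}_{i,\dk^1_i})_i,\dk^1$. This equals $\sk$, so $H_1$ and $H_2$ are identical.
\end{enumerate}
Then $\Pr[H_2\to 1]\le\frac12+\negl(\secpa)$ follows from \Cref{thm:expansion}. Note that its security proof applies verbatim even though the reduction there only hands $A_j$ the keys $\dk^{\re}_{i,\dk^1_i}$, which is precisely what $\sk$ contains.

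The main step is $H_0\approx H_1$, which we argue by a hybrid over $i\in[\ell]$ using the pseudorandomness game $\PR$ for key $\sk^{\re}_{i,1-\dk^1_i}$. The reduction samples $\dk^1$ and all other keys itself and uses its $\PR$ oracle on message $0\dots0$ to produce the $i$-th off-path ciphertext in each encryption. The $\PR$ game provides only one challenge, while we need polynomially many such ciphertexts: across oracle queries, the $t$ challenge copies, and encryption queries after the challenge. We therefore first observe, by a standard hybrid over queries, that single-challenge pseudorandomness with an $\Enc$ oracle implies many-challenge pseudorandomness. Replacing the oracle answers one at a time, each step is reduced to $\PR$ by using the encryption oracle for the earlier unreplaced positions and sampling random strings for the later ones. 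The reduction never needs $\sk^{\re}_{i,1-\dk^1_i}$ to run the final $A_j$, since those keys are never part of $\sk$. Summing the $\ell\cdot\poly(\secpa)$ negligible gaps gives $\abs{\Pr[H_0\to1]-\Pr[H_1\to1]}\le\negl(\secpa)$.

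I expect the main subtlety to be this many-challenge-to-one-challenge step, together with ensuring that the $t$ copies of the challenge are each independently re-randomized. They are, because $\rho_b^{\otimes t}$ consists of independent encryptions, so each copy's off-path ciphertext is an independent sample.
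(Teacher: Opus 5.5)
Your proposal is correct and follows essentially the same route as the paper. The construction is the same: off-path ciphertexts become uniformly random strings. The reduction is also the same: these are swapped, via pseudorandomness, for encryptions of $0\dots0$ under never-revealed keys, which yields exactly the $\CLONE$ game for the scheme of \Cref{thm:expansion}, and then that theorem is invoked. Your hybrid from single-challenge to many-ciphertext pseudorandomness spells out the step the paper simply says ``follows directly from pseudorandomness.''
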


\begin{proof} We first present the construction, and then analyze its security.
    \begin{construction}
    Let $(\Gen^{1},\Enc^1,\Dec^1)$ be a symmetric encryption scheme with one-time $t\to t'$ security for one-bit messages, with ciphertexts of length $\ell_{\ct}^1$ and decryption keys of length $\ell_{\dk}^1$. Let \[(\Enc^{\dqre},\Lab^{\dqre,q},\Lab^{\dqre,c},\Dec^{\dqre})\] be a decomposable quantum randomized encoding, as is guaranteed by~\Cref{cor:dqre}. Let $(\Gen^{\re},\Enc^{\re},\Dec^{\re})$ be a pseudorandom, normal form, reusable IND-CPA secure symmetric encryption scheme with ciphertexts of length $\ell_{\ct}^{\re}$. We define $(\Gen^{\nf},\Enc^{\nf},\Dec^{\nf})$ as follows.
    \begin{enumerate}
        \item $\Gen^{\nf}(1^\secpa)$: 
        \begin{enumerate}
            \item Sample $\Gen^{1}(1^\secpa) \to (\ek^{1},\dk^1)$. Define $\dk^1_{i}$ to be the $i$th bit of $\dk^1$.
            \item For $i\in [\ell_{\dk}^1]$, run $\Gen^{\re}(1^\secpa) \to \sk^\re_{i}$.
            \item Output $\sk = \left(\sk^{\re}_{1},\dots,\sk^{\re}_{\ell_{\dk}^1},\dk^1\right)$.
        \end{enumerate}
        \item $\Enc^\nf(1^\secpa,\sk,m)$: 
        \begin{enumerate}
            \item Parse $\sk = \left(\sk^{\re}_{1},\dots,\sk^{\re}_{\ell_{\dk}^1},\dk^1\right)$.
            \item Define $C[m]$ to be the circuit which always outputs $m$, appropriately padded. It will take in $\ell_{\ct}^1$ qubits and $\ell_{\dk}^1$ classical bits. 
            \item Run $\Enc^{\dqre}(1^\secpa, C[m], \ell_{\ct}^1, \ell_{\dk}^1) \to \widehat{C},r,\sigma$.
            \item For $i \in [\ell_{\ct}^1]$, run $\Lab_i^{\dqre,q}(1^\secpa, 0, r, \sigma_{\Reg{R}_i}) \to \widehat{\rho}_i$.
            \item For $i \in [\ell_{\dk}^1]$, run $\Lab_i^{\dqre,c}(1^\secpa, \dk^1_i, r) \to \lab_{i}$.
            \item For $i\in [\ell_{\dk}^1]$, set $\ct_{i,\dk^1_i} = \Enc^{\re}(1^\secpa, \sk_{i}^{\re}, \lab_{i})$.
            \item For $i\in [\ell_{\dk}^1]$, sample $\ct_{i,1-\dk^1_i}\gets \{0,1\}^{\ell_{\ct}^{\re}}$.
            \item Output $\ct = \left(\widehat{C},\widehat{\rho}_1,\dots,\widehat{\rho}_{\ell_{\ct}^1},\{\ct_{i,b}\}_{i\in [\ell_{\dk}^1],b\in \{0,1\}}\right)$.
        \end{enumerate}
        \item $\Dec^\nf(\sk,\ct)$:
        \begin{enumerate}
            \item Parse $\sk = \left(\sk^{\re}_{1},\dots,\sk^{\re}_{\ell_{\dk}^1},\dk^1\right)$
            \item Parse $\ct = \left(\widehat{C},\widehat{\rho}_1,\dots,\widehat{\rho}_{\ell_{\ct}^1},\ct_{1,0},\dots, \ct_{\ell_{\dk}^1,0},\ct_{1,1},\dots,\ct_{\ell_{\dk}^1,1}\right)$.
            \item For each $i\in [\ell_{\dk}^1]$, run $\Dec^\re(\sk_i^{\re}, \ct_{i,\dk^1_i}) \to \lab_{i}$.
            \item Run $\Dec^{\dqre}(\widehat{C},(\widehat{\rho}_1,\dots,\widehat{\rho}_{\ell_{\ct}^1}),(\lab_{1},\dots,\lab_{\ell_{\dk}^1})) \to m$.
            \item Output $m$.
        \end{enumerate}
    \end{enumerate}

    It is again clear by construction that the protocol is correct and of normal form. It remains to show security.

    Let $A_1^\nf,\dots,A_{t'}^\nf,C^\nf$ be any adversary against $(\Gen^\nf,\Enc^\nf,\Dec^\nf)$. We observe that
    \begin{equation*}
        \begin{split}
            \Bigg | \Pr[\CLONE^{t\to t'}_{A_1^\nf,\dots,A_{t'}^\nf,C^\nf}(\Gen^\nf,\Enc^\nf,\Dec^\nf)\to 1]
            - \Pr[\CLONE^{t\to t'}_{A_1^\nf,\dots,A_{t'}^\nf,C^\nf}(\Gen^{\ell_m},\Enc^{\ell_m},\Dec^{\ell_m})\to 1]\Bigg |
            \leq \negl(\secpa),
        \end{split}
    \end{equation*}
    which follows directly from pseudorandomness of $(\Gen^{\re},\Enc^{\re},\Dec^{\re})$. In particular, $$\CLONE^{t\to t'}_{A_1^\nf,\dots,A_{t'}^\nf,C^\nf}(\Gen^\nf,\Enc^\nf,\Dec^\nf)$$ is exactly $$\CLONE^{t\to t'}_{A_1^\nf,\dots,A_{t'}^\nf,C^\nf}(\Gen^{\ell_m},\Enc^{\ell_m},\Dec^{\ell_m})$$ but with $\ct_{i,1-\dk^1_i}=\Enc^{\re}(1^\secpa,\ek_{i,1-\dk^1_i}^{\re},\lab_{i,1-\dk^1_i})$ replaced by $\ct_{i,1-\dk^1_i}\gets \{0,1\}^{\ell^{\re}_{\ct}}$. Since this is the only place in the game where $(\ek_{i,1-\dk^1_i},\dk_{i,1-\dk^1_i})$ appear, indistinguishability follows directly from pseudorandomness of $(\Gen^{\re},\Enc^{\re},\Dec^{\re})$. Security then follows by the same proof as \Cref{thm:expansion}. 
    \end{construction}
\end{proof}
\section{Identical copy security from PRUs}\label{sec:purification}

We first define the purification channel we will use to instantiate the transformations of~\cite{EPRINT:AnaGol25,EPRINT:CGKNY25}. 

\begin{theorem}[\cite{tang2026conjugatequerieshelp,girardi2026randompurificationchannelsimple}]\label{thm:purechannel}
    For all $t,N,M$, there exists a quantum channel $\Sim_t$ such that for all pure states $\ket{\phi}_{\Reg{A}\Reg{B}} \in \mathcal{H}([N])\otimes \mathcal{H}([M])$ where $\sigma_{\Reg{A}} = \Tr_{\Reg{B}}(\ket{\phi}_{\Reg{A}\Reg{B}})$,
    $$\Sim_t(\sigma_{\Reg{A}}^{\otimes t}) = \E_{U\gets \Haar(M)}\left[\left((I_{\Reg{A}}\otimes U_{\Reg{B}})\ketbra{\phi}_{\Reg{A}\Reg{B}} (I_{\Reg{A}}\otimes U_{\Reg{B}}^\dagger)\right)^{\otimes t}\right].$$
    Furthermore, there is an algorithm implementing $\Sim_t$ up to error $\epsilon$ running in time $\poly(t,\log N, \log M, \log \frac{1}{\epsilon})$.
\end{theorem}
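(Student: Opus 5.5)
We need to prove the purification channel theorem. The plan is to use Schur–Weyl duality. Write $\ket{\phi}_{\Reg{A}\Reg{B}}$ and consider the target state $\tau \coloneqq \E_{U}\big[((I\otimes U)\ketbra{\phi}(I\otimes U^\dagger))^{\otimes t}\big]$ on $\Reg{A}^{t}\Reg{B}^{t}$. The first step is to show that $\tau$ depends on $\ket{\phi}$ only through $\sigma_{\Reg{A}}$, and to compute it explicitly. Any two purifications of $\sigma$ in $\mathcal{H}([N])\otimes\mathcal{H}([M])$ differ by a unitary on $\Reg{B}$, and Haar invariance absorbs it. Concretely, write $\ket{\phi} = (\sqrt{\sigma}\otimes I)\ket{\Omega}$ up to a $\Reg{B}$-isometry, where $\ket{\Omega}$ is an unnormalized maximally entangled vector. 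Then $\ket{\phi}^{\otimes t} = (\sqrt{\sigma}^{\otimes t}\otimes I)\ket{\Omega}^{\otimes t}$. By Schur–Weyl duality, averaging $U^{\otimes t}$ on $\Reg{B}^t$ gives
\[
\tau = \bigoplus_{\lambda} (\sqrt{\sigma}^{\otimes t}\otimes I)\,\frac{\Pi^{\mathrm{sym\text{-}like}}_\lambda}{\dim V^M_\lambda}\,(\sqrt{\sigma}^{\otimes t}\otimes I),
\]
where $\Pi_\lambda$ projects onto the isotypic component in which $\Reg{A}^t$ and $\Reg{B}^t$ both carry the $S_t$-irrep $\lambda$, paired maximally entangled. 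The permutation-module factors are maximally entangled across $\Reg{A}^t$ and $\Reg{B}^t$, while the $\Reg{B}$-side $U(M)$-module is traced out and becomes maximally mixed.

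The second step is to define $\Sim_t$ as the following channel acting on $\sigma^{\otimes t}$ in $\Reg{A}^t$. First apply weak Schur sampling on $\Reg{A}^t$, obtaining $\lambda$ and collapsing onto $V^N_\lambda\otimes P_\lambda$. Then prepare a fresh register $\Reg{B}^t$ in which $P_\lambda$ is maximally entangled with the $P_\lambda$ factor of $\Reg{A}^t$, and a maximally mixed state on $V^M_\lambda$. Finally undo the Schur transform on both sides. Equivalently, $\Sim_t$ "copies" the permutation-register of $\Reg{A}^t$ into a partner system via a maximally entangled state. This step is an isometry on $P_\lambda$ that maps $\ket{i}\mapsto \ket{i}\ket{\bar i}/\ldots$, and it must be carried out as a legitimate channel. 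The key identity to verify is that this channel maps $\sigma^{\otimes t} = \bigoplus_\lambda q_\lambda(\sigma)\otimes I_{P_\lambda}$ to the expression for $\tau$ above. Because $\sigma^{\otimes t}$ is maximally mixed on $P_\lambda$, the state after the channel is $q_\lambda(\sigma)\otimes\frac{\ketbra{\Phi_{P_\lambda}}}{}\otimes \frac{I_{V^M_\lambda}}{\dim V^M_\lambda}$, with weights matching. One checks that $(\sqrt{\sigma}^{\otimes t}\otimes I)\ket{\Omega}^{\otimes t}$ decomposes in exactly this way, using $\ket{\Omega}^{\otimes t}=\bigoplus_\lambda \ket{\Omega_{V_\lambda}}\ket{\Omega_{P_\lambda}}$. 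I expect this verification to be the main obstacle. The normalizations ($\dim P_\lambda$, $\dim V^M_\lambda$), the dimension mismatch $N\neq M$ (which forces $\lambda$ to have at most $\min(N,M)$ rows, so one must argue that the support of $\sigma^{\otimes t}$ has no weight on other $\lambda$ when $M$ bounds the rank of $\sigma$, which holds since $\sigma$ has a purification in dimension $M$), and the complex conjugation on the $\Reg{B}$ side all have to be tracked carefully.

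The third step is efficiency. Weak Schur sampling and the quantum Schur transform (Bacon–Chuang–Harrow) on $t$ qudits of dimension $N$ or $M$ run to accuracy $\epsilon$ in time $\poly(t,\log N,\log M,\log\frac1\epsilon)$. Preparing a maximally entangled state on $P_\lambda$ in the Young–Yamanouchi basis and a maximally mixed state on $V^M_\lambda$ in the Gelfand–Tsetlin basis is efficient, since these are uniform mixtures over efficiently enumerable and rankable tableaux. Summing the errors of the constant number of subroutines gives overall error $\epsilon$.
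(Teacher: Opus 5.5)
The paper gives no proof of this theorem; it imports it from the cited works. Your first step is the standard Schur--Weyl computation and is correct. With $\sigma^{\otimes t}=\bigoplus_\lambda q_\lambda(\sigma)\otimes I_{P_\lambda}$ and real (Young-orthogonal, Gelfand--Tsetlin) bases, it gives
\[
\tau=\sum_{\lambda}q_\lambda(\sigma)\otimes\ketbra{\Omega_{P_\lambda}}\otimes\frac{I_{V^M_\lambda}}{\dim V^M_\lambda},
\]
where $\ket{\Omega_{P_\lambda}}$ is the unnormalized maximally entangled vector on $P_\lambda^{\Reg{A}}\otimes P_\lambda^{\Reg{B}}$. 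The sum runs over $\lambda\vdash t$ with at most $\min(N,M)$ rows, and your rank argument correctly handles the remaining $\lambda$, since there $q_\lambda(\sigma)=0$.

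The re-entangling step, however, fails as you describe it. In each block the $P_\lambda^{\Reg{A}}$ factor of $\sigma^{\otimes t}$ is maximally mixed, with rank $\dim P_\lambda$, whereas the target on $P_\lambda^{\Reg{A}}P_\lambda^{\Reg{B}}$ is pure. No isometry without a discarded environment can do this, because isometries preserve rank. Your specific map $\ket{i}\mapsto\ket{i}\ket{\bar i}$ produces the classically correlated state $\frac{1}{\dim P_\lambda}\sum_i\ketbra{i}\otimes\ketbra{\bar i}$. That state lacks exactly the off-diagonal terms of $\ketbra{\Omega_{P_\lambda}}$, so the output is not $\tau$.

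The correct operation is trace-and-replace:
\begin{enumerate}
\item Measure $\lambda$. This is harmless, because both $\sigma^{\otimes t}$ and $\tau$ are block diagonal in $\lambda$.
\item Discard $P_\lambda^{\Reg{A}}$.
\item Prepare a fresh normalized $\ket{\Phi_{P_\lambda}}$ on $P_\lambda^{\Reg{A}}P_\lambda^{\Reg{B}}$, together with $I_{V^M_\lambda}/\dim V^M_\lambda$.
\end{enumerate}
Nothing is lost, since the discarded factor is maximally mixed and in tensor product with $q_\lambda(\sigma)$. The block output $\dim P_\lambda\, q_\lambda(\sigma)\otimes\ketbra{\Phi_{P_\lambda}}\otimes I_{V^M_\lambda}/\dim V^M_\lambda$ is then exactly the $\lambda$-block of $\tau$. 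On blocks with more than $M$ rows, output any fixed state.

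The efficiency claim also rests on a misattributed running time. The Bacon--Chuang--Harrow Schur transform has size $\poly(t,d,\log\frac1\epsilon)$, which is polynomial in the local dimension $d$, not in $\log d$. Weak Schur sampling alone does run in $\poly(t,\log d)$ via generalized phase estimation. But your channel needs the full transform, both to address $P_\lambda^{\Reg{A}}$ and to re-embed the output on $\Reg{B}^t$.

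This matters here because in this paper $\Sim_t$ acts on ciphertext registers and their purifying registers, so $N,M=2^{\poly(\secpa)}$. A $\poly(N,M)$-time simulator would make the reduction in \Cref{sec:purification} exponential-time. You need a high-dimensional Schur transform running in $\poly(t,\log d,\log\frac1\epsilon)$, such as Krovi's (2019), also sketched in Harrow's thesis, for both $U(N)$ on $\Reg{A}^t$ and $U(M)$ on $\Reg{B}^t$.

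Relatedly, $\dim V^M_\lambda$ is a degree-$t$ polynomial in $M$, hence exponential in $\log M$. So the maximally mixed state on $V^M_\lambda$ cannot be prepared by enumerating Gelfand--Tsetlin patterns. You need a sampling or unranking procedure polynomial in $t$ and $\log M$, compatible with the basis encoding of that Schur transform, and this must be argued rather than asserted.
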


Instantiating $U$ with a pseudorandom unitary (defined below) yields an \emph{efficiently sampleable} family of purifications that is indistinguishable from $\Sim_t(\sigma^{\otimes t})$.

\begin{definition}
    Let $\ell(\secpa)$ be any polynomial. A pseudorandom unitary acting on $\ell(\secpa)$ qubits is a family of efficiently implementable unitaries $\{U_{k}\}_{k\in \{0,1\}^{\secpa}}$ such that for all (non-uniform) QPT oracle algorithms $\A$,
    $$\abs{\Pr_{U\gets \Haar(\{0,1\}^\ell)}[\A^U\to 1] - \Pr_{k\gets \{0,1\}^\secpa}[\A^{U_k} \to 1]} \leq \negl(\secpa).$$
\end{definition}

Note that we don't require security with inverse access to $U_k$ in the above PRU definition. Now, we instantiate the transformation of~\cite{EPRINT:AnaGol25,EPRINT:CGKNY25} using pseudorandom unitaries.

\begin{corollary}
    Let $t(\secpa)<t'(\secpa)$ be any polynomials. If pseudorandom unitaries exist and there exists a symmetric encryption scheme with reusable $t\to t'$ security, then there exists a pure symmetric encryption scheme with reusable $t \to t'$ identical copy security. Furthermore, if the original uncloneable encryption scheme is of normal form, so is the identical copy secure scheme.
\end{corollary}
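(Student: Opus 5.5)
The plan is to use the construction sketched in the introduction. Let $(\Gen,\Enc,\Dec)$ be the reusable $t\to t'$ secure scheme. Without loss of generality, view $\Enc(\ek,m)$ as the reduced state on register $\Reg{A}$ of an efficiently preparable pure state $\ket{\phi_{\ek,m}}_{\Reg{A}\Reg{B}} = W_{\ek}\ket{m,0}$, obtained by purifying all randomness and measurements of $\Enc$ and keeping the garbage in $\Reg{B}$. Let $\{U_k\}$ be a PRU on the $\Reg{B}$ register. Define a pure scheme $(\Gen',\Enc',\Dec')$ with $\Gen'=\Gen$ and randomness $r=k$. The encryption is
\[
\Enc'(\ek,m;k) = (I_{\Reg{A}}\otimes (U_k)_{\Reg{B}})\ket{\phi_{\ek,m}}_{\Reg{A}\Reg{B}},
\]
and $\Dec'(\dk,\cdot)$ discards $\Reg{B}$ and runs $\Dec(\dk,\cdot)$ on $\Reg{A}$. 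This has the form $E_{\ek}\ket{m,k,0}$ for an efficient unitary that applies $W_\ek$ and then $U_k$ controlled on $k$. Correctness holds because the reduced state on $\Reg{A}$ is exactly $\Enc(\ek,m)$. Normal form is preserved since $\Gen'=\Gen$.

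For security, fix an adversary $(A'_1,\dots,A'_{t'},C')$ against $\IDCLONE$ for the new scheme. I would argue through three hybrids.

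\textbf{Hybrid 0} is the real $\IDCLONE$ game. The challenge is $\big((I\otimes U_k)\ketbra{\phi_{\ek,m_b}}(I\otimes U_k^\dagger)\big)^{\otimes t}$, and the oracle answers each query with a fresh $k$.

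\textbf{Hybrid 1} replaces the challenge $U_k$ by a Haar random $U$. Oracle queries still use fresh PRU keys. Indistinguishability comes from PRU security. The reduction samples $(\ek,\dk)$ itself, answers oracle queries itself, prepares $t$ copies of $\ket{\phi_{\ek,m_b}}$, and applies its unitary oracle $t$ times, once on each $\Reg{B}$. It then runs $C'$ and the $A'_j$ with $\dk$ and checks the winning condition. This needs only forward queries, matching the PRU definition. Since everything is efficient given $\ek,\dk$, the winning probabilities differ by $\negl(\secpa)$.

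\textbf{Hybrid 2} applies \Cref{thm:purechannel}. The Hybrid 1 challenge equals $\Sim_t(\Enc(\ek,m_b)^{\otimes t})$, where $\Sim_t$ uses $N=2^{|\Reg{A}|}$ and $M=2^{|\Reg{B}|}$. We implement $\Sim_t$ to error $\epsilon=2^{-\secpa}$ in time $\poly(t,\log N,\log M,\secpa)$, which is polynomial. This changes the winning probability by at most $O(\epsilon)$.

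Finally, we reduce to reusable $t\to t'$ security of the original scheme. The adversary $C$ answers each of $C'$'s oracle queries $m$ as follows: it queries its own oracle... but this returns a mixed state, not the pure $\Enc'$. So queries need care. Since $\Enc'$ only needs $\ek$ implicitly, I would handle this by applying the same purification trick per query. A single query, $\E_k[(I\otimes U_k)\ketbra{\phi}(I\otimes U_k^\dagger)]$, is indistinguishable from $\Sim_1(\Enc(\ek,m))$ by PRU security. One could add these as further hybrids, replacing each fresh-key query with a Haar unitary and then with $\Sim_1$ applied to an oracle answer. Each replacement is a separate PRU key, so a hybrid over the polynomially many queries works. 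The reduction needs only the oracle answers and is otherwise oblivious to $\ek$. Note that queries in Hybrid 1 involve $\ek$, so this query-replacement step should be done before or together with the challenge replacement. The PRU reduction still works because it holds $\ek$.

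After these steps, $C$ receives $\rho_b^{\otimes t}$, applies $\Sim_t$, runs $C'$ with simulated queries, and forwards the output registers. Each $A_j$ runs $A'_j$ with $\dk$. The winning probability matches Hybrid 2 up to negligible error, so it is at most $\frac12+\negl(\secpa)$.

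The main obstacle I expect is making the hybrids airtight. The PRU reductions must be efficient: they need $\ek$ and $\dk$ to prepare the $\ket{\phi}$ states and to run the $A'_j$, which is fine because they sample these keys themselves. The other technical point is the approximate implementation of $\Sim_t$ and the per-query $\Sim_1$, with error chosen exponentially small.
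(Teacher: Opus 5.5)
Your proposal is correct and follows essentially the same route as the paper. It uses the same construction, applying $U_k$ to the purifying register $\Reg{B}$; PRU security to replace $U_k$ with Haar unitaries; \Cref{thm:purechannel} to identify the challenge with $\Sim_t(\rho_b^{\otimes t})$; and the same reduction, which answers oracle queries with $\Sim_1$ applied to the original scheme's ciphertexts and forwards $\dk$ to the $A'_j$. Your explicit sequence of hybrids (challenge first, then a per-query PRU$\to$Haar$\to\Sim_1$ swap) just spells out what the paper compresses into its single intermediate scheme $\Enc^{\uni}$, where every use of $U_k$ is replaced by a fresh Haar unitary at once.
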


\begin{proof} We first present the construction, and then analyze its security.

\begin{construction}
    Let $(\Gen^{\re},\Enc^{\re},\Dec^{\re})$ be an uncloneable encryption scheme. Let $\{U_k\}_{k\in \{0,1\}^\secpa}$ be a pseudorandom unitary.

    Since $\Enc^{\re}$ is an efficient quantum channel, there exists an efficient unitary $E^{\re}$ such that $\Enc^{\re}(\ek,m)$ is exactly the following process.
    \begin{enumerate}
        \item Run $E^{\re}\ket*{\ek,m,0^{a(\secpa)}} \to \ket*{\phi_{\ek,m}}$.
        \item Treat $\ket{\phi_{\ek,m}}$ as a state over some specified registers $\Reg{A},\Reg{B}$
        \item Output $\Tr_{\Reg{B}}(\ketbra{\phi_{\ek,m}}_{\Reg{A}\Reg{B}})$.
    \end{enumerate}
    
    We construct $(\Gen^{\id},\Enc^{\id},\Dec^{\id})$ as follows.
    \begin{enumerate}
        \item $\Gen^{\id} = \Gen^{\re}$.
        \item $\Enc^{\id}(\ek,m;k) = (I_{\Reg{A}} \otimes (U_k)_{\Reg{B}})(E^{\re}\ket*{\ek,m,0^{a(\secpa)}})_{\Reg{A}\Reg{B}}$.
        \item $\Dec^{\id}(\dk,\rho_{\Reg{A}\Reg{B}})$ will output $\Dec^{\re}(\dk,\Tr_{\Reg{B}}(\rho_{\Reg{A}\Reg{B}}))$.
    \end{enumerate}
\end{construction}

To show correctness, note that 
\begin{equation*}
    \begin{split}
        \Tr_{\Reg{B}}\left((I_{\Reg{A}} \otimes (U_k)_{\Reg{B}})(E^{\re}\ket*{\ek,m,0^{a(\secpa)}})_{\Reg{A}\Reg{B}}\right) =\Tr_{\Reg{B}}\left((E^{\re}\ket*{\ek,m,0^{a(\secpa)}})_{\Reg{A}\Reg{B}}\right) = \Enc^{\re}(\ek,m),
    \end{split}
\end{equation*}
and thus, $\Dec^{\id}(\dk,\rho_{\Reg{A}\Reg{B}}) = \Dec^{\re}(\dk, \Enc^{\re}(\ek,m))=m$ with probability $1$.

It thus remains to show reusable $t\to t'$ identical copy security. Given an adversary $(A_1^{\id},\dots,A_{t'}^{\id},C^{\id})$ against $(\Gen^{\id},\Enc^{\id},\Dec^{\id})$, we can construct an adversary $(A_1^{\re},\dots,A_{t'}^{\re},C^{\re})$ against $(\Gen^{\re},\Enc^{\re},\Dec^{\re})$. $(A_1^{\re},\dots,A_{t'}^{\re},C^{\re})$ will simulate $(A_1^{\id},\dots,A_{t'}^{\id},C^{\id})$ as follows.
    \begin{enumerate}
        \item When $C^{\id}$ makes a query $m$ to its oracle, query $\Enc(m) \to \rho$. Return $\Sim_1(\rho)$ to $C^{\id}$.
        \item When $C^{\id}$ outputs a challenge $m_0,m_1$, forward $m_0,m_1$ to the challenger and receive response $\rho_b^{\otimes t}$. Return $\Sim_t(\rho_b^{\otimes t})$ to $C^{\id}$.\footnote{In full formality, $C^{id}$ will run the efficient algorithm implementing $\Sim_t$ up to error $1/2^\secpa$.}
        \item When $C^{\id}$ outputs $\sigma_{A_1,\dots,A_{t'}}$, forward $\sigma_{A_i}$ to party $A_i^{\id}$. Retrieve key $\dk$ from the challenger and forward $\dk$ to each $A_i$.
    \end{enumerate}

    Let us define an intermediate scheme $(\Gen^{\uni},\Enc^{\uni},\Dec^{\uni})$ which will essentially be $(\Gen^{\id},\Enc^{\id},\Dec^{\id})$ but with the pseudorandom unitary $U_k$ replaced with a true random unitary. In particular, $\Enc^{\uni}(\ek,m;U)$ takes as randomness the description of a Haar random unitary $U$ and outputs $(I_{\Reg{A}}\otimes U_{\Reg{B}}) (E_{\ek}^{\re}\ket{\ek,m,0^{a(\secpa)}})_{\Reg{A}\Reg{B}}$. We then define $\Gen^{\uni},\Dec^{\uni}$ to be the same as $\Gen^{\id},\Dec^{\id}$ respectively.

    \Cref{thm:purechannel} gives us that
    \begin{equation*}
        \begin{split}
            \bigg|\Pr[\CLONE^{t\to t',\lambda}_{A_1^{\re},\cdots,A_{t'}^{\re},C^{\re}}(\Gen^{\re},\Enc^{\re},\Dec^{\re}) \to 1]\\
            - \Pr[\IDCLONE^{t\to t',\lambda}_{A_1^{\id},\cdots,A_{t'}^{\id},C^{\id}}(\Gen^{\uni},\Enc^{\uni},\Dec^{\uni}) \to 1]\bigg|\\
            \leq \negl(\secpa).
        \end{split}
    \end{equation*}

    But pseudorandom unitary security gives us
    \begin{equation*}
        \begin{split}
            \bigg|\Pr[\IDCLONE^{t\to t',\lambda}_{A_1^{\id},\cdots,A_{t'}^{\id},C^{\id}}(\Gen^{\uni},\Enc^{\uni},\Dec^{\uni}) \to 1]\\
            - \Pr[\IDCLONE^{t\to t',\lambda}_{A_1^{\id},\cdots,A_{t'}^{\id},C^{\id}}(\Gen^{\id},\Enc^{\id},\Dec^{\id}) \to 1]\bigg|\\
            \leq \negl(\secpa).
        \end{split}
    \end{equation*}

    And so by security of $(\Gen^{\re},\Enc^{\re},\Dec^{\re})$, we conclude
    $$\Pr[\IDCLONE^{t\to t',\lambda}_{A_1^{\id},\cdots,A_{t'}^{\id},C^{\id}}(\Gen^{\id},\Enc^{\id},\Dec^{\id}) \to 1]\leq \frac{1}{2} + \negl(\secpa).$$
\end{proof}

Composing this result with~\Cref{cor:pseudorandom} along with the fact that pseudorandom unitaries imply reusable IND-CPA symmetric key encryption (of normal form)\cite{C:AnaQiaYue22}, we obtain the following corollary.

\begin{corollary}
    Let $t(\secpa)<t'(\secpa)$ be any polynomials. If pseudorandom unitaries exist and there exists a $t\to t'$ uncloneable bit, then there exists a pure, normal form, symmetric encryption scheme with reusable $t\to t'$ identical copy security.
\end{corollary}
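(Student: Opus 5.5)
The plan is to compose three results already in the paper. First, pseudorandom unitaries give a normal form, pseudorandom, reusable IND-CPA symmetric encryption scheme, as in~\cite{C:AnaQiaYue22}. Second, feeding this scheme and the assumed $t\to t'$ uncloneable bit into \Cref{cor:pseudorandom} gives a normal form scheme with reusable $t\to t'$ security for $\ell_m(\secpa)$-bit messages. Third, applying the preceding corollary of this section, with the same PRU family, converts that scheme into a pure scheme with reusable $t\to t'$ identical copy security. That corollary also states that normal form is preserved, since $\Gen^{\id}=\Gen^{\re}$. Purity holds by construction: $\Enc^{\id}(\ek,m;k)$ is an efficient unitary applied to a classical input that includes the random PRU key $k$, which has exactly the form $E_{\ek}\ket{m,r,0}$ with $r=k$.

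The first step carries the real content, and I would write it out explicitly. Let $\{U_k\}$ be a PRU on $n\geq \secpa$ qubits. Define $\Gen(1^\secpa)$ to sample $k\gets\{0,1\}^\secpa$, and let $\Enc(k,m)$ sample $r\gets\{0,1\}^{\secpa}$ and output $U_k\ket{m,r,0}$, with the padding chosen so the dimensions fit. Define $\Dec(k,\rho)$ to apply $U_k^\dagger$, measure, and output the first register. This gives correctness with probability $1$ and normal form.

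Pseudorandomness and IND-CPA are shown in the same way.
\begin{enumerate}
\item Replace $U_k$ by a Haar random $U$ using PRU security. This requires only forward queries: the reduction answers every encryption query and the challenge by querying its oracle on $\ket{m,r,0}$, and it never needs $U^\dagger$. This matches the PRU definition above, which omits inverse access.
\item With a Haar random $U$, $q=\poly(\secpa)$ queries on inputs $\ket{m_j,r_j,0}$ have pairwise distinct $r_j$ except with probability $q^2/2^\secpa$. Conditioned on distinct inputs, the outputs are close to independent Haar random states. In particular, the challenge ciphertext is close to maximally mixed on the span of the images, so it is statistically close to $I/2^{\ell_\ct}$, because the ciphertext space is exponentially larger than the number of queries.
\end{enumerate}
Pseudorandomness then implies IND-CPA by a two-step hybrid through $I/2^{\ell_\ct}$.

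The main obstacle is the statistical step in the Haar world: showing that the joint state of polynomially many Haar-unitary images of distinct orthogonal basis vectors, together with the single challenge image, is close to the joint state in which the challenge is replaced by the maximally mixed state. This needs a bound on the distance between the uniform distribution over $q$ orthonormal vectors and $q$ independent Haar states, for example via the symmetric-subspace argument used for PRS from PRU, with error $O(q^2/2^{n})$. Alternatively, I would simply cite~\cite{C:AnaQiaYue22} for this step, since the paper already invokes that reference for exactly this implication. The remaining composition steps are immediate from the earlier statements.
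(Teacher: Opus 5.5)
Your proposal is correct and matches the paper's proof: the paper obtains this corollary by composing the fact from~\cite{C:AnaQiaYue22} that PRUs give normal form, pseudorandom, reusable IND-CPA SKE with \Cref{cor:pseudorandom}, and then applying the PRU-based identical-copy compiler, which preserves normal form. Your explicit scheme $\Enc(k,m)=U_k\ket{m,r,0}$, together with the random-frame-versus-independent-Haar argument, is a valid self-contained replacement for that citation; the paper itself just cites~\cite{C:AnaQiaYue22}. Note that what makes the challenge maximally mixed is averaging over the Haar unitary, as your final paragraph says, not that a fixed-$U$ ciphertext is maximally mixed on the span of the images.
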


\bibliographystyle{alpha}
\bibliography{refs/abbrev0,refs/crypto,refs/refs}

\end{document}